\documentclass[11pt]{amsart}
\usepackage{amsmath,amssymb,amsthm}
\usepackage[margin=1.1in]{geometry}
\usepackage{xcolor}
\usepackage{tikz}
\usetikzlibrary{arrows.meta,positioning,decorations.pathreplacing,calc}
\usepackage{pgfplots}
\definecolor{gmblue}{HTML}{0072B2}
\definecolor{gmverm}{HTML}{D55E00}
\definecolor{gmgreen}{HTML}{009E73}
\definecolor{gmpurp}{HTML}{AA3377}
\definecolor{gmgold}{HTML}{B8860B}
\pgfplotsset{compat=1.17,
  gmaxis/.style={width=0.47\textwidth, height=5.2cm,
    tick label style={font=\scriptsize}, label style={font=\scriptsize},
    title style={font=\scriptsize},
    legend style={font=\tiny, draw=none, fill=none}, legend cell align=left,
    every axis plot/.append style={line width=0.6pt}}}
\usepackage{hyperref}
\hypersetup{colorlinks=true,linkcolor=black,citecolor=black,urlcolor=black}

\newtheorem{theorem}{Theorem}[section]
\newtheorem{lemma}[theorem]{Lemma}
\newtheorem{proposition}[theorem]{Proposition}
\newtheorem{corollary}[theorem]{Corollary}
\theoremstyle{definition}
\newtheorem{definition}[theorem]{Definition}
\newtheorem{remark}[theorem]{Remark}

\newcommand{\Var}{\operatorname{Var}}
\newcommand{\E}{\mathbb{E}}
\newcommand{\R}{\mathbb{R}}
\newcommand{\C}{\mathbb{C}}
\newcommand{\Z}{\mathbb{Z}}
\newcommand{\Unif}{\operatorname{Unif}}
\newcommand{\diag}{\operatorname{diag}}
\newcommand{\re}{\operatorname{Re}}
\newcommand{\one}{\mathbf{1}}
\newcommand{\Bool}{\mathbb{B}}

\title[Depth analysis of GM-QAOA]{Depth analysis of the Quantum Approximate Optimization Algorithm with a Grover mixer}

\author{Bojko N. Bakalov}
\address{Department of Mathematics, North Carolina State University,
Raleigh, NC 27695, USA}
\email{bnbakalo@ncsu.edu}

\author{Dimitar Grantcharov}
\address{Department of Mathematics, University of Texas at Arlington, Arlington, TX 76021, USA}
\email{grandim@uta.edu}

\date{September 15, 2026}

\begin{document}

\begin{abstract}
We study the Quantum Approximate Optimization Algorithm with a Grover mixer and independently sampled cost and mixing angles. Under a lattice condition on the cost values carried by the initial state, we prove a depth-independent lower bound for the variance of the loss at every depth, together with the same bound for the derivative with respect to the final mixing angle. The estimate is instance-dependent and, for fixed locality, is inverse polynomial in the number of qubits for integer-valued local objective functions with uniformly bounded local terms, in particular for MaxCut. We also establish Grover-type reachability bounds showing that the depth required to approximate a prescribed carried eigenspace is bounded below by a constant multiple of the inverse square root of its initial probability.
\end{abstract}

\maketitle

\section{Introduction}\label{sec:intro}

Variational quantum algorithms use parametrized quantum circuits together with a classical optimization loop to minimize the expectation value of a problem-dependent observable \cite{Cerezo21}. The Quantum Approximate Optimization Algorithm (QAOA), introduced by Farhi, Goldstone, and Gutmann \cite{FGG14}, is a central example for combinatorial optimization. It alternates evolutions generated by a cost Hamiltonian and a mixer Hamiltonian, with the evolution parameters optimized classically. A basic obstacle to this procedure is the barren-plateau phenomenon, in which the gradients of the loss function become exponentially concentrated near zero as the size of the problem increases \cite{MBSBN18,Larocca25}. Quantitative control of loss and gradient fluctuations is therefore a fundamental question in the analysis of QAOA trainability.

We consider the Grover-mixer variant of QAOA (GM-QAOA), introduced by B\"artschi and Eidenbenz \cite{BE20}, in which the mixer Hamiltonian is the projector $G=|\xi\rangle\langle\xi|$ onto the initial state $|\xi\rangle$ and the cost Hamiltonian $H$ is diagonal in the computational basis. This mixer gives a particularly rigid dynamics and is natural for constrained problems in which $|\xi\rangle$ is chosen inside the feasible subspace \cite{BE20,HWORVB19}. It also makes the dependence of trainability on the circuit depth especially transparent. In \cite{TNB25}, the authors determined the dynamical Lie algebra \cite{LCSMCC22,Ragone24} of GM-QAOA. They showed that, for integer-valued $s$-local objective functions, the variance of the normalized loss is bounded below by an inverse polynomial for \emph{sufficiently large depth} (see Theorem~IV.4 of \cite{TNB25}). The variance of the loss is closely related to the usual gradient formulation of barren plateaus \cite{Larocca25,Ragone24}, and under the assumptions of \cite{AHCC22}, exponential concentration of the loss implies exponential concentration of its gradients.

However, the result of \cite{TNB25} is asymptotic in depth and does not give an explicit threshold at which the deep-circuit regime begins. This leaves open whether exponentially small loss or gradient fluctuations can occur at finite depth. Numerical evidence that Lie-algebraic variance predictions do not describe shallow circuits, for QAOA applied to the maximum independent set problem, is given by Copp et al.~\cite{CLMQ26}. We address this question directly, without using dynamical Lie algebras, and obtain a lower bound that is valid at \emph{every depth}. In order to state the result, we first introduce the necessary notation.

For $\beta,\gamma\in\R$, put
\begin{equation}\label{eq:ZR}
Z(\gamma):=e^{-i\gamma H},\qquad R(\beta):=e^{-i\beta G}=I+(e^{-i\beta}-1)\,G.
\end{equation}
Starting with the initial state $|\psi_0\rangle:=|\xi\rangle$, a GM-QAOA circuit (also called an ansatz) of depth $p\ge1$ produces the state
(see Figure~\ref{fig:gmqaoa}):
\begin{equation}\label{eq:psip}
|\psi_p\rangle:=R(\beta_p)Z(\gamma_p)\cdots R(\beta_1)Z(\gamma_1)|\xi\rangle.
\end{equation}
\begin{figure}[t]
\centering
\begin{tikzpicture}[
  >=Latex,
  gate/.style={draw=black, rounded corners=2pt, line width=0.75pt,
    minimum width=1.55cm, minimum height=1.00cm, align=center, fill=black!4},
  state/.style={inner sep=1pt},
  note/.style={font=\scriptsize, align=center},
  every path/.style={draw=black}
]
\coordinate (a0) at (0,0);
\coordinate (a1) at (1.15,0);
\coordinate (a2) at (3.15,0);
\coordinate (a3) at (5.15,0);
\coordinate (a4) at (6.35,0);
\coordinate (a5) at (8.35,0);
\coordinate (a6) at (10.35,0);
\coordinate (a7) at (11.65,0);

\node[state, left=0.12cm of a0] (in) {$|\xi\rangle$};
\node[gate] (z1) at (a1) {$Z(\gamma_1)$\\[-1pt]{\scriptsize $e^{-i\gamma_1H}$}};
\node[gate] (r1) at (a2) {$R(\beta_1)$\\[-1pt]{\scriptsize $e^{-i\beta_1G}$}};
\node[state] (dots) at (a4) {$\cdots$};
\node[gate] (zp) at (a5) {$Z(\gamma_p)$\\[-1pt]{\scriptsize $e^{-i\gamma_pH}$}};
\node[gate] (rp) at (a6) {$R(\beta_p)$\\[-1pt]{\scriptsize $e^{-i\beta_pG}$}};
\node[state, right=0.10cm of a7] (out) {$|\psi_p\rangle$};

\draw[-{Latex[length=2.1mm]}, line width=0.8pt] (in.east) -- (z1.west);
\draw[-{Latex[length=2.1mm]}, line width=0.8pt] (z1.east) -- (r1.west);
\draw[-{Latex[length=2.1mm]}, line width=0.8pt] (r1.east) -- (dots.west);
\draw[-{Latex[length=2.1mm]}, line width=0.8pt] (dots.east) -- (zp.west);
\draw[-{Latex[length=2.1mm]}, line width=0.8pt] (zp.east) -- (rp.west);
\draw[-{Latex[length=2.1mm]}, line width=0.8pt] (rp.east) -- (out.west);

\draw[decorate,decoration={brace,amplitude=4pt},line width=0.7pt]
  ($(z1.south west)+(0,-0.22)$) -- ($(r1.south east)+(0,-0.22)$)
  node[midway,below=5pt,note] {layer $1$};
\draw[decorate,decoration={brace,amplitude=4pt},line width=0.7pt]
  ($(zp.south west)+(0,-0.22)$) -- ($(rp.south east)+(0,-0.22)$)
  node[midway,below=5pt,note] {layer $p$};

\node[note, above=0.17cm of z1] {cost evolution};
\node[note, above=0.17cm of r1] {Grover mixer};
\node[note, above=0.17cm of zp] {cost evolution};
\node[note, above=0.17cm of rp] {Grover mixer};
\end{tikzpicture}
\caption{Schematic of the depth-$p$ GM-QAOA circuit. Each layer consists of a cost evolution $Z(\gamma_k)$ followed by the Grover-mixer evolution $R(\beta_k)$.}
\label{fig:gmqaoa}
\end{figure}
The loss function is defined as the expectation value of the observable $H$:
\begin{equation*}
\ell_p:=\langle\psi_p|H|\psi_p\rangle .
\end{equation*}
Write the spectral decomposition of the Hamiltonian as $H=\sum_\lambda\lambda\Pi_\lambda$.
Let $\lambda_1<\cdots<\lambda_d$ be the distinct eigenvalues $\lambda$ of $H$ for which $\Pi_{\lambda}|\xi\rangle\ne0$, and set
\begin{equation}\label{eq:cj}
c_j:=\sqrt{\langle\xi|\Pi_{\lambda_j}|\xi\rangle}>0, \qquad |\xi_j\rangle:=c_j^{-1}\Pi_{\lambda_j}|\xi\rangle, \qquad 1\le j\le d.
\end{equation}
With this notation, we have
\begin{equation}\label{eq:xi}
|\xi\rangle = \sum_{j=1}^d c_j |\xi_j\rangle, \qquad
\sum_{j=1}^d c_j^2 = 1, \qquad
H|\xi_j\rangle = \lambda_j|\xi_j\rangle,
\end{equation}
and the eigenvectors $|\xi_j\rangle$ are orthonormal (cf.\ \cite{TNB25}). The expectation and variance of $H$ in the initial state are
\begin{equation}\label{eq:b0}
\ell_0=\langle\xi|H|\xi\rangle=\sum_{j=1}^dc_j^2\lambda_j, \qquad
b_0^2:= \langle\xi|H^2|\xi\rangle-\langle\xi|H|\xi\rangle^2 = \sum_{j=1}^dc_j^2(\lambda_j-\ell_0)^2.
\end{equation}
Assume $d\ge2$ and that $\lambda_j-\lambda_k\in\eta\Z$ for all $j,k$, where $\eta>0$. 
For a GM-QAOA circuit in depth-$p$, we sample angles $\beta_k,\gamma_k$ uniformly and independently, with $\beta_k\sim\Unif[0,2\pi)$ and $\gamma_k\sim\Unif[0,2\pi/\eta)$.

The following is the first main result of the paper.

{
\begin{theorem}\label{thm:intro-variance}
With the above notation, we have, for every $p\ge1$,
\[
\Var[\ell_p] \ge\ \frac{49}{384\pi}\,\frac{\eta\,b_0^4}{d^2(\lambda_d-\lambda_1)^3} \,.
\]
\end{theorem}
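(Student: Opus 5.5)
The plan is to use the law of total variance and throw away all randomness except in the last layer, then show the conditional variance in $(\beta_p,\gamma_p)$ is bounded below uniformly in the state fed into that layer. Concretely,
\[
\Var[\ell_p]\ \ge\ \E\bigl[\Var(\ell_p\mid \beta_1,\gamma_1,\dots,\beta_{p-1},\gamma_{p-1})\bigr],
\]
so it suffices to fix an arbitrary $|\psi_{p-1}\rangle$ in the invariant subspace $V=\operatorname{span}\{|\xi_j\rangle\}$. Both $Z(\gamma)$ and $R(\beta)$ preserve $V$, and $|\xi\rangle\in V$. I would then lower-bound the variance of $\ell_p$ over $(\beta_p,\gamma_p)$ by a quantity independent of $|\psi_{p-1}\rangle$. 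The expected obstacle is that at one point this uniform bound seems to need an extra ingredient, described below.

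\textbf{Step 1 (the $\beta_p$-dependence).} Put $|\phi\rangle=Z(\gamma_p)|\psi_{p-1}\rangle$, $a=\langle\xi|\phi\rangle$ and $b=\langle\xi|(H-\ell_0)|\phi\rangle$. Expanding $R(\beta)=I+(e^{-i\beta}-1)G$ shows that $\ell_p$ is a trigonometric polynomial of degree one in $\beta_p$. Its constant term depends only on $\phi$. Its $e^{\pm i\beta_p}$ coefficients equal, up to conjugation and sign, $\bar a\,b$ plus terms that cancel once $\ell_0|a|^2$ is subtracted. Hence, for $\beta_p\sim\Unif[0,2\pi)$,
\[
\Var_{\beta_p}(\ell_p)=2\,|a|^2|b|^2 ,
\]
and I will check the exact constant carefully. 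The same computation gives $\partial_{\beta_p}\ell_p$ with mean-square $2|a|^2|b|^2$, which is why the derivative inherits the same bound.

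\textbf{Step 2 (the $\gamma_p$-average).} Write $|\psi_{p-1}\rangle=\sum_j x_j|\xi_j\rangle$ with $\sum|x_j|^2=1$. Then
\[
a(\gamma)=\sum_j c_jx_je^{-i\gamma\lambda_j},\qquad b(\gamma)=\sum_j c_j(\lambda_j-\ell_0)x_je^{-i\gamma\lambda_j}.
\]
By the lattice condition, after factoring out $e^{-i\gamma\lambda_1}$ these are trigonometric polynomials in $\eta\gamma$ with integer frequencies in $[0,N]$, where $N=(\lambda_d-\lambda_1)/\eta$. The measure $\gamma\sim\Unif[0,2\pi/\eta)$ is exactly the orthogonality measure for these frequencies, so I need a lower bound for $\E_\gamma|a|^2|b|^2$. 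I would first try the following route.
\begin{itemize}
\item Apply Cauchy--Schwarz in the form $\E|ab|^2\ge(\E|ab|)^2$.
\item Control $\E|ab|$ from below using that $|a|$ and $|b|$ cannot both be small on large sets. A Remez/Turán-type inequality for trigonometric polynomials with $d$ terms (Nazarov's version) bounds $\sup|a|$ by a multiple of the average over any set of proportion $\gtrsim 1/d$. Together with Bernstein's inequality $\|a'\|_\infty\le N\|a\|_\infty$, this yields a set of $\gamma$ of measure $\gtrsim 1/N$ on which $|a|$ is comparable to its $L^2$ norm, and likewise for $b$.
\end{itemize}
This is where I expect the factors $\eta/(\lambda_d-\lambda_1)$ and $1/d^2$ to come from, with the numerical constant $49/(384\pi)$ arising from optimizing a cutoff.

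\textbf{Main obstacle.} The difficulty is uniformity in $x$. We have $\E|a|^2=\sum c_j^2|x_j|^2$ and $\E|b|^2=\sum c_j^2(\lambda_j-\ell_0)^2|x_j|^2$, and both can be tiny if $\psi_{p-1}$ is concentrated on eigenvectors with small $c_j$. A bound of the form $b_0^4$ therefore needs more than the last layer alone. My first attempt is to also condition on $\beta_{p-1}$ only. Then $|\psi_{p-1}\rangle=|\chi\rangle+(e^{-i\beta_{p-1}}-1)\langle\xi|\chi\rangle|\xi\rangle$, and I would split into two cases.
\begin{itemize}
\item \emph{$|\langle\xi|\chi\rangle|$ is not small.} The random phase injects a component along $|\xi\rangle$ with weights $c_j$. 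Averaging over $\beta_{p-1}$ then gives $\E|a|^2\gtrsim\sum c_j^4$ and $\E|b|^2\gtrsim\sum c_j^4(\lambda_j-\ell_0)^2$, which relate to $b_0^2$ through $\sum c_j^2=1$ and Cauchy--Schwarz over $d$ terms. This is a second source of the $1/d$ factors.
\item \emph{$|\langle\xi|\chi\rangle|$ is small.} Here I would run the same argument one layer earlier, since then the variance of $\ell_p$ must already be carried by the $(\beta_{p-1},\gamma_{p-1})$ fluctuation.
\end{itemize}
Making this dichotomy quantitative without losing depth dependence is the step I expect to be hardest. If it resists, I would instead look for a conserved or monotone quantity, for example $\sum_j|x_j|^2/c_j^2$-type weights, that prevents $\psi_{p-1}$ from escaping the region where $a$ and $b$ are large.
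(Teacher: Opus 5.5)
Your skeleton is right up to a point: the law of total variance, your Step~1 (the identity $\Var_{\beta_p}\ell_p=2|a|^2|b|^2$, which is Lemma~\ref{lem:lastmixer}), your diagnosis that no bound uniform in $|\psi_{p-1}\rangle$ can hold, and the idea of also conditioning on $\beta_{p-1}$ all match the paper. But the proposal stops exactly where the argument has to be made, and the tools you suggest would not give the stated bound.

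\begin{itemize}
\item The Turán--Nazarov constant for a $d$-term exponential sum on a set of relative measure $\mu$ is of order $(C/\mu)^{d-1}$. For $\mu\sim1/d$ this is exponential in $d$, so it cannot produce $1/d^2$.
\item Separate lower bounds on $\E_{\beta_{p-1}}|a|^2$ and $\E_{\beta_{p-1}}|b|^2$ say nothing about $\E|a|^2|b|^2$.
\item Your dichotomy needs a depth-independent control of how often the overlap is small, and that control is precisely what is missing.
\end{itemize}

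The first missing idea is to treat the product jointly and exactly under the $\beta_{p-1}$-average. Let $o:=\langle\xi|\varphi_{p-1}\rangle$ be the overlap of the state entering the $(p-1)$-st mixer. Then $a=A+(e^{-i\beta_{p-1}}-1)\,o\,s(\gamma_p)$ and $b=A'+(e^{-i\beta_{p-1}}-1)\,o\,t(\gamma_p)$, where $s(\gamma)=\langle\xi|Z(\gamma)|\xi\rangle$, $t(\gamma)=\langle(H-\ell_0)\xi|Z(\gamma)|\xi\rangle$, and $A,A'$ do not depend on $\beta_{p-1}$. Expand in $u=e^{i\vartheta}$ and keep the degree-zero terms. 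This gives $\E_\vartheta\bigl[|A+(e^{i\vartheta}-1)B|^2|A'+(e^{i\vartheta}-1)B'|^2\bigr]\ge|B|^2|B'|^2$, and hence
\[
\Var[\ell_p]\ \ge\ \E\bigl[|o|^4\bigr]\cdot M,\qquad M:=2\,\E_{\gamma}\bigl[|s(\gamma)|^2|t(\gamma)|^2\bigr].
\]
No case split is needed: the component injected along $|\xi\rangle$ cannot be cancelled on average by the rest of the state. This also removes any need for inequalities on general exponential sums, because $M$ involves only the fixed functions $s$ and $t$.

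The second missing idea is a depth-independent bound on $\E|o|^4$. This is the ``monotone quantity'' you were looking for, but it lives at the level of averages, not pointwise weights. By Jensen, $\E|o|^4\ge(\E|o|^2)^2$. Average one layer over $\beta_k$ and then over $\gamma_k$; the lattice condition kills every cross term $e^{-i\gamma_k(\lambda_j-\lambda_l)}$ with $j\ne l$. This shows that the mean populations $q^{(k)}_j:=\E|\langle\xi_j|\psi_k\rangle|^2$ obey the closed linear recursion $q^{(k)}=(I-2W)q^{(k-1)}$, with $q^{(0)}=c^2$ and $W=\diag(c^2)-c^2(c^2)^\top$. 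It also gives $\E|o|^2=(c^2)^\top(I-2W)^{p-2}c^2$.

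Now $W\succeq0$, since it is a weighted-variance form, and $W\preceq\tfrac12I$ by Popoviciu. So the symmetric matrix $I-2W$ has spectrum in $[0,1]$ and fixes $\one$. Expanding $c^2$ in its eigenbasis gives $\E|o|^2\ge\langle c^2,\one/\sqrt d\rangle^2=1/d$. This is where the factor $1/d^2$ comes from, not from Remez or Cauchy--Schwarz over $d$ terms.

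Finally, the factor $\eta b_0^4/L^3$ and the constant $49/(384\pi)$ come from an elementary estimate of $M$; for $p=1$ one uses $\Var_1\ge M$ directly. On the window $|\gamma|\le1/L$:
\begin{itemize}
\item $|s(\gamma)|\ge7/8$, from $\cos x\ge1-x^2/2$ and $b_0^2\le L^2/4$;
\item $|t(\gamma)|\ge|\gamma|b_0^2/2$, by a second-order Taylor bound using $t(0)=0$ and $|t'(0)|=b_0^2$.
\end{itemize}
The window lies in one period because $L\ge\eta$. Integrating over it against the density $\eta/(2\pi)$ gives exactly $\frac{49}{384\pi}\frac{\eta b_0^4}{L^3}$, with no cutoff optimization involved.
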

}

Corollary~\ref{cor:gradient} below gives the same lower bound for the variance of the derivative of $\ell_p$ with respect to the final mixing angle $\beta_p$. The estimate is instance-dependent. An inverse-polynomial barren-plateau conclusion for a growing family requires the right-hand side of the estimate to be bounded below by an inverse polynomial in the problem size. Theorem~\ref{thm:local} verifies this condition for integer-valued $s$-local objective functions with bounded local terms and the uniform-superposition initial state, and thereby gives an inverse-polynomial lower bound at every depth. It is a finite-depth counterpart of Theorem~IV.4 of \cite{TNB25}, which gives an inverse-polynomial bound for a sufficiently large depth. Corollary~\ref{cor:maxcut} specializes Theorem~\ref{thm:local} to MaxCut.

The argument consists of three parts. The variance of the loss with respect to the last mixing angle alone is an explicit multiple of the product of two squared overlaps (Lemma~\ref{lem:lastmixer}). These are the overlaps of the state entering the last mixer with the initial state and with a fixed second vector. The previous mixer adds to the state a multiple of the initial state whose coefficient depends affinely on the uniformly distributed phase $e^{-i\beta}$. An averaging inequality (Lemma~\ref{lem:phase}) shows that this addition cannot be canceled, on average, by the rest of the state. Finally, the mean squared overlap, or mean return probability, of the state with the initial state is at least $1/d$ at every depth, by an explicit linear recursion for the mean populations of the cost eigenspaces (Lemma~\ref{lem:populations}). The depth-one variance is bounded below in Proposition~\ref{prop:onelayer}, and Theorem~\ref{thm:main} combines this estimate with Lemmas~\ref{lem:lastmixer}--\ref{lem:populations} to obtain the bound at arbitrary depth.

We also study the reachability of GM-QAOA. QAOA reachability deficits, that is, the failure of the ansatz to reach the optimum below a threshold depth, were identified by Akshay et al.~\cite{APMB20}. The authors of \cite{TNB25} observe, as a consequence of their Lie-algebraic result, that the normalized projection of the initial state onto a carried eigenspace can be approximated to arbitrary precision at sufficiently large depth. Fix an eigenvalue $\lambda_* = \lambda_{j_0}$ for some $1\le j_0\le d$, and denote by
\begin{equation*}
|e\rangle:=|\xi_{j_0}\rangle = \frac{\Pi_{\lambda_*}|\xi\rangle}{\|\Pi_{\lambda_*}|\xi\rangle\|}
\end{equation*}
the corresponding normalized eigenstate. We also let
\begin{equation*}
\sin\theta_0:=\|\Pi_{\lambda_*}|\xi\rangle\| = c_{j_0} \in(0,1),
\qquad
\theta_0\in(0,\pi/2).
\end{equation*}
In the reachability results, the angles $\beta_k,\gamma_k$ are arbitrary real numbers rather than random variables.

Our second main result gives a lower bound on the depth required to approximate the chosen eigenstate $|e\rangle$.

\begin{theorem}
\label{thm:intro-depth}
Let\/ $\varepsilon\in(0,1)$. If\/
$|\langle e|\psi_p\rangle|^2\ge1-\varepsilon$ for some choice of
parameters $\beta_k,\gamma_k$, then
\[
p\ge
\frac{\arcsin\sqrt{1-\varepsilon}}{2\theta_0}-\frac12 \,.
\]
In particular, if\/ $|\langle e|\psi_p\rangle|^2=1$, then
\[
p\ge
\frac{\pi}{4\theta_0}-\frac12
\ge
\frac12\left(\frac1{\sin\theta_0}-1\right).
\]
If, in addition, $|\xi\rangle=|+\rangle^{\otimes n}$, $H$ is
diagonal in the computational basis, and $\lambda_*$ is attained on
exactly $N_*$ basis states, then
\[
p\ge
\frac12\bigl(2^{n/2}N_*^{-1/2}-1\bigr).
\]
\end{theorem}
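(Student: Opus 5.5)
We track the angle between the evolving state and the eigenstate $|e\rangle$. Define $\theta(\phi)\in[0,\pi/2]$ for a unit vector $|\phi\rangle$ by $\cos\theta(\phi)=|\langle e|\phi\rangle|$. We use the Fubini--Study distance $d(\phi,\chi)=\arccos|\langle\phi|\chi\rangle|$, which satisfies the triangle inequality and is invariant under unitaries and global phases. Since $|\langle e|\xi\rangle|=c_{j_0}=\sin\theta_0$, the starting distance is $d(e,\xi)=\pi/2-\theta_0$. The target condition $|\langle e|\psi_p\rangle|^2\ge1-\varepsilon$ means $d(e,\psi_p)\le\arccos\sqrt{1-\varepsilon}=\pi/2-\arcsin\sqrt{1-\varepsilon}$. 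So the state must travel a distance of at least $\arcsin\sqrt{1-\varepsilon}-\theta_0$ in Fubini--Study distance to $|e\rangle$.

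The cost layers do not change this distance. Since $H|e\rangle=\lambda_*|e\rangle$, we have $\langle e|Z(\gamma)|\phi\rangle=e^{-i\gamma\lambda_*}\langle e|\phi\rangle$, so $d(e,Z\phi)=d(e,\phi)$. For a mixer layer, $R(\beta)$ changes only the component of $|\phi\rangle$ along $|\xi\rangle$. We write $R(\beta)|\phi\rangle=|\phi\rangle+(e^{-i\beta}-1)\langle\xi|\phi\rangle|\xi\rangle$. Then
\[
\langle e|R(\beta)\phi\rangle=\langle e|\phi\rangle+(e^{-i\beta}-1)\langle\xi|\phi\rangle\sin\theta_0 ,
\]
and hence $\bigl||\langle e|R\phi\rangle|-|\langle e|\phi\rangle|\bigr|\le 2\sin\theta_0$ in amplitude.

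The main step, and the main obstacle, is to turn this into the angular bound $|d(e,R\phi)-d(e,\phi)|\le2\theta_0$. An amplitude change of $2\sin\theta_0$ gives an angle change of up to $2\sin\theta_0$ only near $\theta=\pi/2$, so I would instead argue geometrically. The unitary $R(\beta)$ acts as the identity on $\xi^\perp$ and by the phase $e^{-i\beta}$ on $\xi$. Write $|e\rangle=\sin\theta_0|\xi\rangle+\cos\theta_0|f\rangle$ with $f\perp\xi$. Then $R(\beta)^\dagger|e\rangle$ differs from $|e\rangle$ only in the phase of its $\xi$-component, so
\[
|\langle e|R^\dagger e\rangle|=|\cos^2\theta_0+e^{i\beta}\sin^2\theta_0|\ge\cos^2\theta_0-\sin^2\theta_0=\cos2\theta_0 .
\]
This gives $d(e,R^\dagger e)\le2\theta_0$. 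By unitary invariance and the triangle inequality,
\[
d(e,R\phi)=d(R^\dagger e,\phi)\le d(e,\phi)+2\theta_0,
\]
and the reverse inequality $d(e,\phi)\le d(e,R\phi)+2\theta_0$ holds as well.

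Iterating over the $p$ mixers gives $\pi/2-\theta_0-d(e,\psi_p)\le2p\theta_0$. Combining this with the target condition yields $\arcsin\sqrt{1-\varepsilon}-\theta_0\le 2p\theta_0$, which is the first claim. When $\varepsilon\to0$ (exact reachability), this becomes $p\ge\pi/(4\theta_0)-1/2$. Since $\theta_0\le(\pi/2)\sin\theta_0$ by concavity of $\sin$ on $[0,\pi/2]$, we get $\pi/(4\theta_0)\ge 1/(2\sin\theta_0)$, which gives the second inequality. For $|\xi\rangle=|+\rangle^{\otimes n}$, the projection onto the $\lambda_*$ eigenspace has squared norm $N_*2^{-n}$, so $\sin\theta_0=\sqrt{N_*2^{-n}}$. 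Substituting this value gives the final bound.
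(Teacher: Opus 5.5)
Your proof is correct, but it obtains the key one-layer estimate by a different route from the paper's. The paper derives Theorem~\ref{thm:intro-depth} from the Grover envelope of Theorem~\ref{thm:grover}(1). That envelope is the recursion $\phi_k\le\min\{\phi_{k-1}+2\theta_0,\pi/2\}$ with $\sin\phi_k=|\langle e|\psi_k\rangle|$. Since $\phi_k=\pi/2-d(e,\psi_k)$, this is exactly your inequality $d(e,\psi_k)\ge d(e,\psi_{k-1})-2\theta_0$.

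The paper proves the recursion by explicit optimization. It maximizes $|\langle e|R(\beta)\Phi\rangle|$ over $\beta$ to obtain the convex function $F(y)$ of the component $y=\langle\xi'|\Phi\rangle$, and pushes the maximum to the circle $|y|=\cos\phi$. It then maximizes a concave function of $\cos\omega$, with separate cases for $B=0$ and $\phi+2\theta_0\ge\pi/2$.

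You replace all of this with two ingredients: the triangle inequality for the Fubini--Study angle, and the single estimate $|\langle e|R(\beta)^\dagger e\rangle|=|\cos^2\theta_0+e^{i\beta}\sin^2\theta_0|\ge\cos2\theta_0$. When $\theta_0>\pi/4$, the bound $d(e,R^\dagger e)\le2\theta_0$ holds trivially because $d\le\pi/2$. Your argument is shorter and needs no case analysis. It also isolates the only two facts that matter: the cost layers fix the ray of $|e\rangle$, and each mixer moves that ray by Fubini--Study distance at most $2\theta_0$. Hence it applies verbatim to any interleaving of unitaries with these two properties, without using the invariant subspace $W_0$.

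What the paper's computation buys is the exact one-step optimum together with its equality condition ($y=-r$). In words, the component of $|\varphi_k\rangle$ orthogonal to $|e\rangle$ must be a negative multiple of $|\xi'\rangle$ after phase normalization. The paper then uses this to discuss when the envelope is attained: always when $d=2$, and under the phase condition $\gamma_k(\lambda_j-\lambda_*)\in\pi+2\pi\Z$ for general $d$. Extracting that from your approach would require a separate analysis of the equality case in the spherical triangle inequality.
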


Theorem~\ref{thm:grover} below proves the Grover envelope underlying Theorem~\ref{thm:intro-depth} and bounds the probability of every set of carried cost values by its initial probability multiplied by a quadratic function of depth. For uniform feasible-state initialization, Theorem~\ref{thm:grover}(3) recovers, in the common GM-QAOA setting, the $(2p+1)^2$ amplification bound of Bridi and de Lima Marquezino \cite{BM24}. The pointwise and optimal-set special cases of this bound were obtained by Xie et al.~\cite{Xie25}. In particular, the last bound in Theorem~\ref{thm:intro-depth} has the same inverse-square-root scaling as these earlier bounds. The binary-spectrum case of Theorem~\ref{thm:grover}(1) is the variational Grover setting studied in \cite{BM24,Li24}. The threshold variant of GM-QAOA introduced by Golden et al.~\cite{GBOE21} can be viewed as a generalization of Grover search to approximate optimization, and Benchasattabuse et al.~\cite{BBGLE25} obtained depth lower bounds for guaranteed approximation ratios with the Grover mixer. In the present paper, the initial spectral weights and the carried spectrum are arbitrary. Theorem~\ref{thm:inject} gives a spectrum-dependent refinement, and Theorem~\ref{thm:run} gives an independent obstruction to exact reachability when the carried spectrum contains a run of consecutive lattice points. Related consequences of expressivity and overparameterization \cite{LJGCC23} for the complete-graph quantum-walk-based optimization algorithm \cite{MW20}, which is equivalent to GM-QAOA up to a rescaling of the mixing parameter, were obtained in \cite{BLPSMA26}. Theorems~\ref{thm:intro-variance} and~\ref{thm:intro-depth} together show that a depth-independent lower bound on the loss variance is compatible with an exponentially large depth requirement for concentrating on an optimum whose initial probability is exponentially small.

The paper is organized as follows. Section~\ref{sec:setting} fixes the setting and notation. Section~\ref{sec:proof-main} proves Theorem~\ref{thm:intro-variance}. Section~\ref{sec:applications} gives applications of Theorem~\ref{thm:intro-variance}. Section~\ref{sec:reach} proves Theorem~\ref{thm:intro-depth} and develops further reachability bounds. Finally, Section~\ref{sec:conclusions} gives the conclusions.

\section{Setting and notation}\label{sec:setting}

{In this section, we fix the setting and notation. We follow the notation of \cite{TNB25} whenever it applies.}

\subsection{The GM-QAOA ansatz}

Let $V$ be a complex finite-dimensional Hilbert space, $H$ be a Hermitian operator on $V$, and $|\xi\rangle\in V$ be a fixed unit vector. Write $H=\sum_{\lambda}\lambda\,\Pi_\lambda$ for the spectral decomposition of $H$, and let $\lambda_1<\lambda_2<\dots<\lambda_d$ be the eigenvalues $\lambda$ with $\Pi_\lambda|\xi\rangle\ne0$. For $j=1,\dots,d$, we define $c_j$ and $|\xi_j\rangle$ by \eqref{eq:cj}; then \eqref{eq:xi} holds.
We denote by
\begin{equation*}
W_0:=\operatorname{span}\bigl\{|\xi_1\rangle,\dots,|\xi_d\rangle\bigr\}
\end{equation*}
the subspace of $V$ with orthonormal basis $\{|\xi_j\rangle\}$.
Throughout, we assume $d\ge2$.

The Grover mixer is the orthogonal projector $G:=|\xi\rangle\langle\xi|$. 
For a depth $p\ge1$ and real parameters $\beta=(\beta_1,\dots,\beta_p)$, $\gamma=(\gamma_1,\dots,\gamma_p)$, the GM-QAOA ansatz prepares
the state $|\psi_p\rangle$ given by \eqref{eq:psip} (see also \eqref{eq:ZR}),
and evaluates the loss $\ell_p:=\langle\psi_p|H|\psi_p\rangle$.
The alternating circuit is shown schematically in Figure~\ref{fig:gmqaoa}.

We also write $|\psi_0\rangle:=|\xi\rangle$ and
\begin{equation}\label{eq:phikok}
|\varphi_k\rangle:=Z(\gamma_k)|\psi_{k-1}\rangle,\qquad o_k:=\langle\xi|\varphi_k\rangle, \qquad 1\le k\le p,
\end{equation}
so that $|\varphi_k\rangle$ is the state entering the $k$-th mixer and $|\psi_k\rangle=R(\beta_k)|\varphi_k\rangle$.

\subsection{Parameter distribution and standing assumptions}

\begin{definition}\label{def:lattice}
The pair $(H,|\xi\rangle)$ satisfies the \emph{lattice condition with spacing $\eta>0$} if $\lambda_j-\lambda_k\in\eta\Z$ for all $1\le j,k\le d$.
\end{definition}

The lattice condition holds with $\eta=1$ whenever $H$ is a diagonal operator with integer entries, which is the case for integer-valued objective functions. Under the lattice condition, the operators $Z(\gamma+2\pi/\eta)$ and $Z(\gamma)$ agree on $W_0$ up to a global phase, so a cost angle $\gamma_k$ is naturally an element of $[0,2\pi/\eta)$.
Similarly, the mixer angles $\beta_k\in [0,2\pi)$.

\begin{definition}\label{def:law}
Under the lattice condition with spacing $\eta$, we sample the parameters $\beta_1,\dots,\beta_p$, $\gamma_1,\dots,\gamma_p$ independently and uniformly with $\beta_k\sim\Unif[0,2\pi)$ and $\gamma_k\sim\Unif[0,2\pi/\eta)$. We write $\Var_p:=\Var[\ell_p]$ for the variance of the loss $\ell_p$ under this law.
\end{definition}

\begin{definition}\label{def:data}
Set
\[
|e_1\rangle:=b_0^{-1}(H-\ell_0)|\xi\rangle,\qquad L:=\lambda_d-\lambda_1,
\]
where $\ell_0$ and $b_0$ are given by \eqref{eq:b0}.
In addition, put $c^2:=(c_1^2,\dots,c_d^2)^\top\in\R^d$ and
\[
W:=\diag(c^2)-c^2(c^2)^\top,\qquad M_1:=I-2W .
\]
\end{definition}

Since $d\ge2$ and all $c_j>0$, we have $b_0>0$ and $L>0$, so $|e_1\rangle$ is well defined. Moreover, we have $\langle\xi|e_1\rangle=b_0^{-1}(\langle\xi|H|\xi\rangle-\ell_0)=0$ and $\|e_1\|^2=b_0^{-2}\sum_jc_j^2(\lambda_j-\ell_0)^2=1$, so $|e_1\rangle$ is a unit vector orthogonal to $|\xi\rangle$. For $x\in\R^d$, we have 
\begin{equation}\label{eq:xWx}
x^\top Wx=\sum_{j=1}^d c_j^2x_j^2-\Bigl(\sum_{j=1}^d c_j^2x_j\Bigr)^2, 
\end{equation}
which is the variance of the coordinates of $x$ under the weights $c_j^2$. Hence, $W$ is positive semidefinite and $W\one=0$ for the all-ones vector $\one$. The matrix $M_1$ is symmetric and satisfies $M_1\one=\one$. Lemma~\ref{lem:populations}(3) below shows that $M_1$ is also positive semidefinite.

\begin{definition}\label{def:charfun}
For $\gamma\in\R$, put
\begin{align*}
s(\gamma)&:=\langle\xi|Z(\gamma)|\xi\rangle=\sum_{j=1}^d c_j^2e^{-i\gamma\lambda_j},\\ 
t(\gamma)&:=\langle(H-\ell_0)\xi|Z(\gamma)|\xi\rangle=\sum_{j=1}^d c_j^2(\lambda_j-\ell_0)e^{-i\gamma\lambda_j},
\end{align*}
and $M:=2\,\E_{\gamma\sim\Unif[0,2\pi/\eta)}\bigl[|s(\gamma)|^2|t(\gamma)|^2\bigr]$.
\end{definition}

The function $\gamma\mapsto s(-\gamma)$ is the characteristic function of the cost distribution $(c_j^2)_j$ on the values $(\lambda_j)_j$, and we have $t(\gamma)=i\,s'(\gamma)-\ell_0s(\gamma)$.
Characteristic-function methods for Grover-driven QAOA were used by Headley and Wilhelm \cite{HW23}.

\begin{lemma}[\cite{TNB25}]\label{lem:invariant}
The subspace $W_0$ is invariant under $H$, under $Z(\gamma)$ for every $\gamma$, and under $R(\beta)$ for every $\beta$. On $W_0$, in the orthonormal basis $\{|\xi_j\rangle\}$, we have $H=\diag(\lambda_1,\dots,\lambda_d)$ and $|\xi\rangle=(c_1,\dots,c_d)^\top$. In particular, we have $|\psi_p\rangle\in W_0$ for all $p$ and all parameters.
\end{lemma}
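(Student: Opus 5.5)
The plan is to check invariance directly from the three facts already recorded in \eqref{eq:xi}. These are the decomposition $|\xi\rangle=\sum_j c_j|\xi_j\rangle$, the eigenvalue relations $H|\xi_j\rangle=\lambda_j|\xi_j\rangle$, and the orthonormality of the $|\xi_j\rangle$. Invariance under every gate then gives $|\psi_p\rangle\in W_0$ by induction on $p$. No step here is a real obstacle; the only point needing care is orthonormality, which I would verify first so that $\{|\xi_j\rangle\}$ is a genuine orthonormal basis of $W_0$.

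For orthonormality, I would use that the spectral projectors $\Pi_\lambda$ are orthogonal projectors with $\Pi_\lambda\Pi_\mu=\delta_{\lambda\mu}\Pi_\lambda$. For $j\ne k$ this gives $\langle\xi_j|\xi_k\rangle=c_j^{-1}c_k^{-1}\langle\xi|\Pi_{\lambda_j}\Pi_{\lambda_k}|\xi\rangle=0$. For $j=k$ it gives $\langle\xi_j|\xi_j\rangle=c_j^{-2}\langle\xi|\Pi_{\lambda_j}|\xi\rangle=1$ by the definition \eqref{eq:cj} of $c_j$. Next, $H\Pi_{\lambda_j}=\lambda_j\Pi_{\lambda_j}$ gives $H|\xi_j\rangle=\lambda_j|\xi_j\rangle$, so $H$ maps $W_0$ into itself and is represented in this basis by $\diag(\lambda_1,\dots,\lambda_d)$. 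Since $\sum_\lambda\Pi_\lambda=I$ and $\Pi_\lambda|\xi\rangle=0$ for every non-carried $\lambda$, we have $|\xi\rangle=\sum_j\Pi_{\lambda_j}|\xi\rangle=\sum_j c_j|\xi_j\rangle$. Thus $|\xi\rangle\in W_0$ has coordinate vector $(c_1,\dots,c_d)^\top$.

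For the gates, $Z(\gamma)=e^{-i\gamma H}$ is a function of $H$, so $Z(\gamma)|\xi_j\rangle=e^{-i\gamma\lambda_j}|\xi_j\rangle$ and $W_0$ is invariant. For the mixer, I would use the explicit form $R(\beta)=I+(e^{-i\beta}-1)|\xi\rangle\langle\xi|$ from \eqref{eq:ZR}. For $|w\rangle\in W_0$ this gives $R(\beta)|w\rangle=|w\rangle+(e^{-i\beta}-1)\langle\xi|w\rangle\,|\xi\rangle$, which lies in $W_0$ because $|\xi\rangle\in W_0$.

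Finally, $|\psi_0\rangle=|\xi\rangle\in W_0$. If $|\psi_{k-1}\rangle\in W_0$, then $|\varphi_k\rangle=Z(\gamma_k)|\psi_{k-1}\rangle\in W_0$ and $|\psi_k\rangle=R(\beta_k)|\varphi_k\rangle\in W_0$. Induction on $k$ then gives $|\psi_p\rangle\in W_0$ for all $p$ and all parameters.
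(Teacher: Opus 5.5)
Your proposal is correct and follows essentially the same route as the paper: eigenvector relations give invariance under $H$ and $Z(\gamma)$, the rank-one form of $R(\beta)$ together with $|\xi\rangle\in W_0$ gives invariance under the mixer, and induction on $p$ finishes. The only difference is that you also verify the orthonormality of the $|\xi_j\rangle$ and the decomposition $|\xi\rangle=\sum_j c_j|\xi_j\rangle$ from the spectral projectors, which the paper simply takes as given in \eqref{eq:xi}.
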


\begin{proof}
Each $|\xi_j\rangle$ is an eigenvector of $H$ with eigenvalue $\lambda_j$, so $H$ and $Z(\gamma)=e^{-i\gamma H}$ preserve $W_0$ and act on the basis $\{|\xi_j\rangle\}$ as stated. Since $|\xi\rangle=\sum_jc_j|\xi_j\rangle\in W_0$, the operator $R(\beta)=I+(e^{-i\beta}-1)\,G$ also preserves $W_0$. The last claim follows by induction on $p$.
\end{proof}

By Lemma~\ref{lem:invariant}, we can identify $W_0$ with $\C^d$, $H$ with $\diag(\lambda)$, $|\xi\rangle$ with the vector $(c_j)_j$, and the loss with $\ell_p=\sum_j\lambda_j|\langle\xi_j|\psi_p\rangle|^2$. We make these identifications throughout.
For a detailed discussion of this invariant-subspace reduction and of the restricted Hamiltonians on $W_0$, see Section~III.A of \cite{TNB25}.

\section{\texorpdfstring{Proof of Theorem~\ref{thm:intro-variance}}{Proof of Theorem 1.1}}\label{sec:proof-main}

We first establish three preliminary lemmas in Subsection~\ref{sec:lemmas} and a lower bound for the depth-one variance in Subsection~\ref{sec:onelayer}. These results are combined in Subsection~\ref{sec:main} to prove Theorem~\ref{thm:intro-variance} for arbitrary depth.

\subsection{Three lemmas}\label{sec:lemmas}

In this subsection, we prove three lemmas: an exact formula for the variance over one mixing angle, a phase-averaging inequality, and a recursion for the mean populations of the cost eigenspaces.

\begin{lemma}\label{lem:lastmixer}
Let $|\varphi\rangle\in W_0$ be a unit vector and $\beta\sim\Unif[0,2\pi)$. Then
\[
\langle R(\beta)\varphi|H|R(\beta)\varphi\rangle=\langle\varphi|H|\varphi\rangle+2\re\bigl[(e^{-i\beta}-1)\,z\bigr],\qquad z:=b_0\,\langle\xi|\varphi\rangle\langle\varphi|e_1\rangle,
\]
and consequently
\[
\Var_\beta\bigl[\langle R(\beta)\varphi|H|R(\beta)\varphi\rangle\bigr]=2|z|^2=2\,b_0^2\,|\langle\xi|\varphi\rangle|^2\,|\langle e_1|\varphi\rangle|^2 .
\]
\end{lemma}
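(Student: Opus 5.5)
The plan is a direct expansion of $R(\beta)=I+(e^{-i\beta}-1)G$ followed by a Fourier computation in $\beta$. Write $a:=e^{-i\beta}-1$ and $o:=\langle\xi|\varphi\rangle$, so that $R(\beta)|\varphi\rangle=|\varphi\rangle+a\,o\,|\xi\rangle$. Expanding the quadratic form gives
\[
\langle R\varphi|H|R\varphi\rangle=\langle\varphi|H|\varphi\rangle+2\re\bigl[a\,o\,\langle\varphi|H|\xi\rangle\bigr]+|a|^2|o|^2\,\ell_0 .
\]
The key identity is $|a|^2=2-2\cos\beta=-2\re a$. This lets us absorb the last term into the cross term:
\[
|a|^2|o|^2\ell_0=-2\re\bigl[a\,o\,\overline{o}\,\ell_0\bigr]=-2\re\bigl[a\,o\,\ell_0\langle\varphi|\xi\rangle\bigr].
\]
Combining the two terms gives $2\re\bigl[a\,o\,\langle\varphi|(H-\ell_0)|\xi\rangle\bigr]$. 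By Definition~\ref{def:data}, $(H-\ell_0)|\xi\rangle=b_0|e_1\rangle$, so this equals $2\re[a z]$ with $z=b_0\,o\,\langle\varphi|e_1\rangle$. This proves the first formula. Note that only Hermiticity of $H$ is used here, so membership $|\varphi\rangle\in W_0$ is not really needed beyond consistency with the notation.

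For the variance, the term $\langle\varphi|H|\varphi\rangle$ is constant in $\beta$. The variance therefore equals that of
\[
2\re[(e^{-i\beta}-1)z]=2\re[e^{-i\beta}z]-2\re z .
\]
The constant $-2\re z$ drops out. Writing $z=|z|e^{i\alpha}$, we have $2\re[e^{-i\beta}z]=2|z|\cos(\alpha-\beta)$. Under $\beta\sim\Unif[0,2\pi)$ this has mean $0$ and second moment $4|z|^2\cdot\tfrac12=2|z|^2$. Finally, $|z|^2=b_0^2|\langle\xi|\varphi\rangle|^2|\langle e_1|\varphi\rangle|^2$.

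The only step requiring care is the absorption of the $|a|^2$ term via $|a|^2=-2\re a$. Without it, a stray $\cos\beta$ contribution would appear to survive with coefficient $|o|^2\ell_0$. Once it is combined, the $\beta$-dependence is a pure first harmonic, and the variance computation is immediate.
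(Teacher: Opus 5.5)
Your proposal is correct and follows the paper's proof essentially step for step. You expand $R(\beta)|\varphi\rangle=|\varphi\rangle+a\,o\,|\xi\rangle$, absorb the $|a|^2$ term via $|a|^2=-2\re a$ to get $2\re\bigl[a\,o\,\langle\varphi|(H-\ell_0)|\xi\rangle\bigr]$, and then compute the variance of the resulting first harmonic in $\beta$. Your side remark that only Hermiticity of $H$ is used, and not $|\varphi\rangle\in W_0$, is also accurate.
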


\begin{proof}
Write $w:=e^{-i\beta}-1$ and $o:=\langle\xi|\varphi\rangle$, so that $R(\beta)|\varphi\rangle=|\varphi\rangle+wo|\xi\rangle$. Expanding the quadratic form, we obtain
\[
\langle R(\beta)\varphi|H|R(\beta)\varphi\rangle=\langle\varphi|H|\varphi\rangle+2\re\bigl[w\,o\,\langle\varphi|H|\xi\rangle\bigr]+|w|^2|o|^2\langle\xi|H|\xi\rangle .
\]
Since $|w|^2=2-2\cos\beta=-2\re w$ and $\langle\xi|H|\xi\rangle=\ell_0$, the last term is equal to $2\re\bigl[w\,(-|o|^2\ell_0)\bigr]=2\re\bigl[w\,o\,(-\bar o\,\ell_0)\bigr]$. Combining the two $\re$ terms gives $2\re\bigl[w\,o\,\langle\varphi|(H-\ell_0)|\xi\rangle\bigr]$, and $\langle\varphi|(H-\ell_0)|\xi\rangle=b_0\langle\varphi|e_1\rangle$ by the definition of $|e_1\rangle$. This proves the first identity with $z=b_0\,o\,\langle\varphi|e_1\rangle$.

For the variance, we have $2\re[(e^{-i\beta}-1)z]=2\re[e^{-i\beta}z]-2\re z$, and the second term is constant in $\beta$. Writing $z=|z|e^{i\theta}$, we have $\re[e^{-i\beta}z]=|z|\cos(\theta-\beta)$, whose mean over $\beta\sim\Unif[0,2\pi)$ is $0$ and whose mean square is $|z|^2/2$. Hence, the variance is $4\cdot|z|^2/2=2|z|^2$.
\end{proof}


\begin{lemma}\label{lem:phase}
Let $A,B,A',B'\in\C$ and $\vartheta\sim\Unif[0,2\pi)$. Then
\[
\E_\vartheta\Bigl[\,\big|A+(e^{i\vartheta}-1)B\big|^2\,\big|A'+(e^{i\vartheta}-1)B'\big|^2\,\Bigr]\ \ge\ |B|^2|B'|^2 .
\]
\end{lemma}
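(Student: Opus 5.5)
The plan is to rewrite each factor as the squared modulus of an affine function of $u:=e^{i\vartheta}$ and then use that these are trigonometric polynomials of degree one. Put $a:=A-B$ and $a':=A'-B'$, so the two factors become $|a+uB|^2$ and $|a'+uB'|^2$.

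The key observation is that $F(u):=(a+uB)(a'+uB')=aa'+(aB'+a'B)u+BB'u^2$ is a polynomial in $u$. Therefore the integrand equals $|F(e^{i\vartheta})|^2$. By Parseval's identity on the circle, the mean over $\vartheta\sim\Unif[0,2\pi)$ equals the sum of the squared moduli of the coefficients:
\[
\E_\vartheta\bigl[|F(e^{i\vartheta})|^2\bigr]=|aa'|^2+|aB'+a'B|^2+|BB'|^2\ \ge\ |B|^2|B'|^2 .
\]
Only the top coefficient $BB'$ is needed for the lower bound; discarding the other two nonnegative terms gives the claim.

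Equivalently, one could expand the product of $|a|^2+|B|^2+2\re(\bar a B u)$ and the analogous primed factor, and note that the cross terms average to zero except $2\re(\bar aB\,a'\bar B'\,\cdot)$. That route is messier, so I would use the Parseval argument.

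There is essentially no obstacle. The only step needing care is the substitution $A+(e^{i\vartheta}-1)B=(A-B)+e^{i\vartheta}B$, together with the check that $u^k$ for $k=0,1,2$ are orthonormal in $L^2$ of the uniform measure on the circle.
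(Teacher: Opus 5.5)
Your proof is correct, and it takes a different route from the paper at the key step.

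Both arguments start with the same substitution $\tilde A=A-B$, $\tilde A'=A'-B'$. The paper then expands $|\tilde A+uB|^2$ and $|\tilde A'+uB'|^2$ separately as real trigonometric polynomials. It keeps the degree-zero terms of their product, which gives the mean as $(|\tilde A|^2+|B|^2)(|\tilde A'|^2+|B'|^2)+2\re[\overline{\tilde A}\tilde A'B\overline{B'}]$. The cross term in this expression can be negative. The paper controls it with $\re[\cdot]\ge-|\cdot|$ and a completed square, arriving at the lower bound $|\tilde A|^2|\tilde A'|^2+|B|^2|B'|^2+(|\tilde A||B'|-|B||\tilde A'|)^2$.

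You instead note that the integrand equals $|F(e^{i\vartheta})|^2$ for the single quadratic $F(u)=(a+uB)(a'+uB')$. This absorbs the troublesome cross term into $|aB'+a'B|^2$. Parseval then gives the mean exactly as a sum of three squares, and the only estimate left is discarding nonnegative terms. The paper's intermediate bound follows from your identity by the reverse triangle inequality $|aB'+a'B|\ge\bigl||aB'|-|a'B|\bigr|$.

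Your version buys transparency: it shows why no cancellation can occur, namely that the top coefficient $BB'$ of $F$ is orthogonal to the lower ones. It also generalizes with no extra work. For any finite product of affine factors $a_i+uB_i$, or indeed any polynomial in $u$, the mean squared modulus is at least the squared modulus of the leading coefficient. The paper's computation avoids the factorization observation but pays for it with an extra inequality step.
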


\begin{proof}
Put $u:=e^{i\vartheta}$, $\tilde A:=A-B$, and $\tilde A':=A'-B'$, so that the two factors are $|\tilde A+uB|^2$ and $|\tilde A'+uB'|^2$. Expanding, we obtain
\[
|\tilde A+uB|^2=|\tilde A|^2+|B|^2+\overline{\tilde A}Bu+\tilde A\overline{B}\,\bar u,
\]
and similarly for the primed quantities. The product is a trigonometric polynomial in $\vartheta$ whose mean is the sum of the terms of total degree zero in $u$, namely
\[
\E_\vartheta\bigl[|\tilde A+uB|^2|\tilde A'+uB'|^2\bigr]=(|\tilde A|^2+|B|^2)(|\tilde A'|^2+|B'|^2)+2\re\bigl[\overline{\tilde A}\tilde A'B\overline{B'}\,\bigr].
\]
Using $\re\bigl[\overline{\tilde A}\tilde A'B\overline{B'}\,\bigr]\ge-|\tilde A||\tilde A'||B||B'|$ and expanding the product of sums, we obtain for the right-hand side above:
\begin{align*}
(|\tilde A|^2+|B|^2)(|\tilde A'|^2+|B'|^2)+2\re\bigl[\overline{\tilde A}\tilde A'B\overline{B'}\,\bigr]
&\ge\ |\tilde A|^2|\tilde A'|^2+|B|^2|B'|^2+\bigl(|\tilde A||B'|-|B||\tilde A'|\bigr)^2\ \\
&\ge\ |B|^2|B'|^2 .
\end{align*}
This proves the lemma.
\end{proof}

\begin{lemma}\label{lem:populations}
Assume the lattice condition and the parameter law of Definition~\ref{def:law}. For $k\ge0$ and $1\le j\le d$, let $q^{(k)}_j:=\E\,|\langle\xi_j|\psi_k\rangle|^2$ and form the vector $q^{(k)}:=(q^{(k)}_j)_j$. Then the following hold.
\begin{enumerate}
\item We have $q^{(0)}=c^2$ and $q^{(k)}=M_1q^{(k-1)}$ for $k\ge1$, where $M_1$ is from Definition~\ref{def:data}.
\item For $k\ge1$, we have $\E[|o_k|^2]=\bar q_{k-1}$, where $\bar q_{k}:=(c^2)^\top q^{(k)}=(c^2)^\top M_1^{k}c^2$, and $o_k$ is defined in Eq.~\eqref{eq:phikok}.
\item The matrix $M_1$ is positive semidefinite, and $\bar q_k\ge1/d$ for every $k\ge0$.
\end{enumerate}
\end{lemma}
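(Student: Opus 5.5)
Part (1) is proved by a one-step computation, conditioning on $|\psi_{k-1}\rangle$ and averaging over the fresh angles $\gamma_k,\beta_k$, which are independent of $|\psi_{k-1}\rangle$. Write $a_j:=\langle\xi_j|\psi_{k-1}\rangle$. Then $\langle\xi_j|\varphi_k\rangle=e^{-i\gamma_k\lambda_j}a_j$ and $o_k=\sum_m c_m e^{-i\gamma_k\lambda_m}a_m$. Since $R(\beta)=I+(e^{-i\beta}-1)G$, the new amplitude is
\[
\langle\xi_j|\psi_k\rangle=e^{-i\gamma_k\lambda_j}a_j+(e^{-i\beta_k}-1)c_j o_k .
\]
We square this and average over $\beta_k$ using $\E|e^{-i\beta}-1|^2=2$ and $\E[e^{-i\beta}-1]=-1$. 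This gives
\[
|a_j|^2-2\re\bigl[\overline{e^{-i\gamma_k\lambda_j}a_j}\,c_jo_k\bigr]+2c_j^2|o_k|^2 .
\]
Next we average over $\gamma_k$. The lattice condition gives $\E_\gamma e^{-i\gamma(\lambda_m-\lambda_j)}=\delta_{mj}$, because $\lambda_m-\lambda_j\in\eta\Z$ and the period is $2\pi/\eta$. So the cross term becomes $c_j^2|a_j|^2$, and $\E_\gamma|o_k|^2=\sum_m c_m^2|a_m|^2$. The result is
\[
|a_j|^2-2c_j^2|a_j|^2+2c_j^2\textstyle\sum_m c_m^2|a_m|^2=(M_1 x)_j,\qquad x_j=|a_j|^2 .
\]
This expression is linear in $x$, so taking the full expectation gives $q^{(k)}=M_1q^{(k-1)}$. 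The initial condition $q^{(0)}=c^2$ holds by definition.

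Part (2) is the same $\gamma_k$-averaging applied to $|o_k|^2$: it gives $\E|o_k|^2=\sum_m c_m^2 q^{(k-1)}_m=(c^2)^\top q^{(k-1)}$. Part (1) then yields $(c^2)^\top q^{(k-1)}=(c^2)^\top M_1^{k-1}c^2$.

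Part (3) has two claims.
\begin{itemize}
\item \emph{Positive semidefiniteness.} For real $x$, \eqref{eq:xWx} gives
\[
x^\top M_1x=\sum_j(1-2c_j^2)x_j^2+2\Bigl(\sum_jc_j^2x_j\Bigr)^2 .
\]
This is not obviously nonnegative when some $c_j^2>1/2$, and I expect this to be the main (mild) obstacle. The plan is a rank-one perturbation argument: $M_1=D+2c^2(c^2)^\top$ with $D=\diag(1-2c_j^2)$. At most one $c_j^2$ exceeds $1/2$, say $j=1$, so $D$ has at most one negative entry. Suppose the eigenvector for that entry, $e_1$, is not killed. Then I would bound $(1-2c_1^2)x_1^2+2(c_1^2x_1+y)^2$ with $y=\sum_{j\ge2}c_j^2x_j$. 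I would control $y$ by Cauchy--Schwarz,
\[
y^2\le\Bigl(\sum_{j\ge2}\tfrac{c_j^4}{1-2c_j^2}\Bigr)\sum_{j\ge2}(1-2c_j^2)x_j^2 ,
\]
and then minimize the quadratic form in $x_1$. As an alternative, it suffices to check $\det M_1\ge0$ via the matrix determinant lemma,
\[
\det M_1=\det D\,\Bigl(1+2\sum_j \frac{c_j^4}{1-2c_j^2}\Bigr),
\]
combined with the fact that at most one eigenvalue can be negative (interlacing). The sign computation then uses $\sum_j c_j^2=1$: with $u=c_1^2>1/2$, one needs $1+2\sum_j c_j^4/(1-2c_j^2)\le0$. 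The $j=1$ term equals $2u^2/(1-2u)$, and the remaining terms are bounded using $\sum_{j\ge2}c_j^2=1-u$.
\item \emph{The bound $\bar q_k\ge1/d$.} Given positive semidefiniteness, $M_1$ is symmetric PSD with $M_1\one=\one$, so $\one/\sqrt d$ is a unit eigenvector with eigenvalue $1$. Decompose $c^2=\alpha\one/\sqrt d+r$ with $r\perp\one$ and $\alpha=(c^2)^\top\one/\sqrt d=1/\sqrt d$. Because $M_1^k$ preserves the orthogonal decomposition,
\[
(c^2)^\top M_1^kc^2=\alpha^2+r^\top M_1^kr\ge\alpha^2=1/d .
\]
\end{itemize}
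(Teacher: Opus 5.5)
Your parts (1) and (2), and your derivation of $\bar q_k\ge1/d$ from positive semidefiniteness, follow the paper's argument: average over $\beta_k$, then over $\gamma_k$, then split off the eigenvector $\one/\sqrt d$ of $M_1$.

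The real difference is in how you prove $M_1\succeq0$. The paper shows $W\preceq\frac12 I$ directly. By Popoviciu's inequality, $x^\top Wx\le\frac14(\max_jx_j-\min_jx_j)^2\le\frac12\|x\|^2$, which takes one line and needs no cases. You instead write $M_1=D+2c^2(c^2)^\top$ and note that only a single dominant weight $u=c_1^2>\frac12$ can make $D$ indefinite. You then use the determinant lemma together with interlacing.

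You stopped before the decisive estimate, but it goes through. For $j\ge2$ one has $c_j^2\le1-u$, hence $1-2c_j^2\ge2u-1>0$ and $\sum_{j\ge2}c_j^4\le(1-u)^2$. Therefore
\[
1+2\sum_{j=1}^d\frac{c_j^4}{1-2c_j^2}\ \le\ 1-\frac{2u^2}{2u-1}+\frac{2(1-u)^2}{2u-1}\ =\ -1 .
\]
Since $\det D<0$, this gives $\det M_1>0$ strictly. Combined with interlacing (at most one negative eigenvalue, and in fact $\mu_i(M_1)\ge\mu_i(D)>0$ for $i\ge2$), no eigenvalue of $M_1$ can be negative. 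Your Cauchy--Schwarz variant also closes. Using $\sum_{j\ge2}c_j^4/(1-2c_j^2)\le(1-u)^2/(2u-1)$, the resulting $2\times2$ form in $(x_1,y)$ has a determinant condition that reduces to $(1-2u)^2\ge0$.

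So your route is correct and uses only elementary linear algebra. It also yields slightly more: strict positive definiteness of $M_1$ whenever one weight exceeds $\frac12$. The cost is a case split and an explicit computation. The paper's variance bound gives the uniform operator estimate $\|W\|\le\frac12$ immediately, for all weight vectors at once.
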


\begin{proof}
(1) The case $k=0$ is $|\langle\xi_j|\xi\rangle|^2=c_j^2$. Let $k\ge1$, condition on $|\psi_{k-1}\rangle$, and write $x_j:=\langle\xi_j|\psi_{k-1}\rangle$. Then $\langle\xi_j|\varphi_k\rangle=e^{-i\gamma_k\lambda_j}x_j$ and $o_k=\sum_lc_le^{-i\gamma_k\lambda_l}x_l$. With $w:=e^{-i\beta_k}-1$, we have
\[
\langle\xi_j|\psi_k\rangle=\langle\xi_j|\varphi_k\rangle+w\,o_k\,c_j .
\]
Averaging over $\beta_k$ first and using the identity $\E_\beta|a+wb|^2=|a|^2-2\re(a\bar b)+2|b|^2$ for fixed $a,b\in\C$, we obtain
\begin{equation*}
\E_{\beta_k}|\langle\xi_j|\psi_k\rangle|^2=|x_j|^2-2\re\Bigl[e^{-i\gamma_k\lambda_j}x_j\,c_j\sum_{l=1}^d c_le^{i\gamma_k\lambda_l}\bar x_l\Bigr]+2c_j^2\Big|\sum_{l=1}^d c_le^{-i\gamma_k\lambda_l}x_l\Big|^2 .
\end{equation*}
We now average over $\gamma_k\sim\Unif[0,2\pi/\eta)$. For $j\ne l$, the difference $\lambda_j-\lambda_l$ is a nonzero element of $\eta\Z$, so the average of $e^{-i\gamma_k(\lambda_j-\lambda_l)}$ over $\gamma_k$ vanishes. Hence, the middle term averages to $-2c_j^2|x_j|^2$ and the last term to $2c_j^2\sum_lc_l^2|x_l|^2$. Taking the expectation over $|\psi_{k-1}\rangle$, we obtain
\[
q^{(k)}_j=(1-2c_j^2)\,q^{(k-1)}_j+2c_j^2\sum_{l=1}^d c_l^2q^{(k-1)}_l=\bigl((I-2W)q^{(k-1)}\bigr)_j ,
\]
which is the claimed recursion.

(2) By the same averaging over $\gamma_k$, we have $\E_{\gamma_k}|o_k|^2=\sum_jc_j^2|x_j|^2$, and taking the expectation over $|\psi_{k-1}\rangle$ gives $\E|o_k|^2=\sum_jc_j^2q^{(k-1)}_j=\bar q_{k-1}$. The formula $\bar q_k=(c^2)^\top M_1^kc^2$ follows from (1).

(3) We first show $W\preceq\frac12I$. Let $x\in\R^d$ be a unit vector. Then $x^\top Wx$ is the variance of the coordinates of $x$ under the probability weights $c_j^2$, and these coordinates lie in the interval $[\min_jx_j,\max_jx_j]$. By Popoviciu's inequality on variances \cite{BD00}, we have $x^\top Wx\le\frac14(\max_jx_j-\min_jx_j)^2$. Moreover, we have $(\max_jx_j-\min_jx_j)^2\le2\bigl((\max_jx_j)^2+(\min_jx_j)^2\bigr)\le2\|x\|^2=2$. Hence, we have $x^\top Wx\le\frac12$, so $M_1=I-2W\succeq0$.

Since $M_1$ is symmetric with $0\preceq M_1\preceq I$ (the upper bound because $W\succeq0$), it has an orthonormal eigenbasis $(\phi_i)_i$ with eigenvalues $\mu_i\in[0,1]$. Moreover, the vector $\phi_*:=d^{-1/2}\one$ is an eigenvector with eigenvalue $1$ because $M_1\one=\one$. Therefore,
\[
\bar q_k=\sum_{i=1}^d \mu_i^k\,\langle c^2,\phi_i\rangle^2\ \ge\ \langle c^2,\phi_*\rangle^2=\frac{1}{d}\Big(\sum_{j=1}^d c_j^2\Big)^2=\frac1d .
\]
This completes the proof.
\end{proof}

\subsection{Depth one}\label{sec:onelayer}

In this subsection, we bound the depth-one variance in terms of $M$ from Definition~\ref{def:charfun} and give an explicit lower bound for $M$.

\begin{proposition}\label{prop:onelayer}
Assume the lattice condition with spacing $\eta$ and the parameter law of Definition~\ref{def:law}. Then
\[
\Var_1\ \ge\ M\ \ge\ \frac{49}{384\pi}\,\frac{\eta\,b_0^4}{L^3}.
\]
\end{proposition}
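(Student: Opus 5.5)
By the law of total variance, conditioning on $\gamma_1$ gives $\Var_1\ge\E_{\gamma_1}\bigl[\Var_{\beta_1}[\ell_1]\bigr]$. So the plan is to compute the inner conditional variance exactly with Lemma~\ref{lem:lastmixer}, and then bound the resulting average over $\gamma_1$ from below by looking only at small cost angles.

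\emph{Step 1: $\Var_1\ge M$.} At depth one the state entering the mixer is $|\varphi_1\rangle=Z(\gamma_1)|\xi\rangle\in W_0$, a unit vector. Its two overlaps are
\[
\langle\xi|\varphi_1\rangle=s(\gamma_1),\qquad \langle e_1|\varphi_1\rangle=b_0^{-1}\langle(H-\ell_0)\xi|Z(\gamma_1)|\xi\rangle=b_0^{-1}t(\gamma_1).
\]
Lemma~\ref{lem:lastmixer} then gives, for fixed $\gamma_1$,
\[
\Var_{\beta_1}[\ell_1]=2\,b_0^2\,|s(\gamma_1)|^2\,b_0^{-2}|t(\gamma_1)|^2=2|s(\gamma_1)|^2|t(\gamma_1)|^2.
\]
Averaging over $\gamma_1\sim\Unif[0,2\pi/\eta)$ yields exactly $M$, so $\Var_1\ge M$.

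\emph{Step 2: pointwise bounds near $\gamma=0$.} The function $|s|^2|t|^2$ is nonnegative, so it suffices to integrate it over a short window $\gamma\in[0,a/L]$ with $a\le1$. This window lies inside $[0,2\pi/\eta)$ because $L\ge\eta$ (distinct lattice points differ by at least $\eta$). Multiplying by the global phase $e^{i\gamma\ell_0}$ does not change absolute values, so I will center the sums. For $s$, taking the real part and using $\cos x\ge 1-x^2/2$,
\[
|s(\gamma)|\ge\sum_j c_j^2\cos\bigl(\gamma(\lambda_j-\ell_0)\bigr)\ge 1-\tfrac12\gamma^2b_0^2 .
\]
For $t$, taking the imaginary part of $e^{i\gamma\ell_0}t(\gamma)$ and using $x\sin x\ge x^2-x^4/6$ for every real $x$, together with $|\lambda_j-\ell_0|\le L$,
\[
|t(\gamma)|\ge\sum_j c_j^2(\lambda_j-\ell_0)\sin\bigl(\gamma(\lambda_j-\ell_0)\bigr)\ge \gamma b_0^2\bigl(1-\tfrac16\gamma^2L^2\bigr),\qquad \gamma\ge0 .
\]
Popoviciu's inequality gives $b_0^2\le L^2/4$, so the bound on $s$ becomes $|s(\gamma)|\ge 1-\gamma^2L^2/8$. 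Both factors are then bounded below by positive constants on the window.

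\emph{Step 3: integrate.} Substitute $x=\gamma L$. This gives
\[
M\ \ge\ 2\cdot\frac{\eta}{2\pi}\int_0^{a/L}|s|^2|t|^2\,d\gamma\ \ge\ \frac{\eta\,b_0^4}{\pi L^3}\int_0^a x^2\Bigl(1-\frac{x^2}{8}\Bigr)^2\Bigl(1-\frac{x^2}{6}\Bigr)^2dx .
\]
Only the choice of $a$ and crude estimates of the polynomial factors remain. For example, take $a=1$ and bound the factors below by their values at $x=1$ (or keep the exact polynomial). This gives a numerical constant at least $49/384$; the target $49/384=\frac{49}{64}\cdot\frac16$ suggests the authors used the factor $(7/8)^2$ from $s$ together with a similar crude estimate on the $t$-factor integral.

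The main obstacle is Step 2: obtaining a lower bound on $|t|$ that is linear in $\gamma$ with coefficient exactly $b_0^2$, and making sure the chosen window stays inside one period. The window condition holds because $L\ge\eta$. Everything else is elementary calculus.
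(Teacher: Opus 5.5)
Your proof is correct and follows the paper's route. The law of total variance with Lemma~\ref{lem:lastmixer} gives $\Var_1\ge M$, and $M$ is then bounded on a window of width $1/L$ using $\cos x\ge1-x^2/2$, Popoviciu's $b_0^2\le L^2/4$ (so $|s|\ge 7/8$), and a linear-in-$\gamma$ lower bound on $|t|$.

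The differences are only technical. The paper integrates over the symmetric window $[-1/L,1/L]$, which requires the $2\pi/\eta$-periodicity of $|s|$ and $|t|$, and uses the cruder remainder estimate $|t(\gamma)|\ge|\gamma|b_0^2/2$. Your one-sided window lies directly inside $[0,2\pi/\eta)$, so no periodicity is needed. Your imaginary-part bound $|t(\gamma)|\ge\frac56\gamma b_0^2$ more than compensates for the lost half of the window, giving $\frac{49}{64}\cdot\frac{25}{36}\cdot\frac13=\frac{1225}{6912}>\frac{882}{6912}=\frac{49}{384}$. You should write this arithmetic out explicitly rather than only asserting ``at least $49/384$''.
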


\begin{proof}
We first show $\Var_1\ge M$. By the law of total variance \cite{Dur19}, we have $\Var_1\ge\E_{\gamma_1}\bigl[\Var_{\beta_1}[\ell_1]\bigr]$. Apply Lemma~\ref{lem:lastmixer} with $|\varphi\rangle=|\varphi_1\rangle=Z(\gamma_1)|\xi\rangle$. By the definitions of \(s\), \(t\), and \(|e_1\rangle\) (see Definitions \ref{def:data}, \ref{def:charfun}), we have
\[
\langle\xi|\varphi_1\rangle=s(\gamma_1), \qquad
b_0\langle e_1|\varphi_1\rangle=\langle(H-\ell_0)\xi|Z(\gamma_1)|\xi\rangle=t(\gamma_1). 
\]
Hence, we have $\Var_{\beta_1}[\ell_1]=2|s(\gamma_1)|^2|t(\gamma_1)|^2$, and averaging over $\gamma_1$ gives $\Var_1\ge M$.

We now bound $M$. Put 
\[
\tilde s(\gamma):=e^{i\gamma \ell_0}s(\gamma)=\sum_{j=1}^d c_j^2e^{-i\gamma(\lambda_j-\ell_0)},\qquad
\tilde t(\gamma):=e^{i\gamma \ell_0}t(\gamma),
\]
so that $|\tilde s|=|s|$ and $|\tilde t|=|t|$. For real $x$, we have $\cos x\ge1-x^2/2$, hence
\[
|\tilde s(\gamma)|\ \ge\ \re\tilde s(\gamma)=\sum_{j=1}^d c_j^2\cos\bigl(\gamma(\lambda_j-\ell_0)\bigr)\ \ge\ 1-\frac{\gamma^2b_0^2}{2}.
\]
The values $\lambda_j$ lie in an interval of length $L$, so by Popoviciu's inequality on variances \cite{BD00}, we have $b_0^2\le L^2/4$. Consequently, we have $|s(\gamma)|\ge7/8$ for $|\gamma|\le1/L$.

Next, we have 
\[
\tilde t(0)=\sum_{j=1}^d c_j^2(\lambda_j-\ell_0)=0, \qquad
\tilde t'(0)=-i\sum_{j=1}^d c_j^2(\lambda_j-\ell_0)^2=-ib_0^2. 
\]
Using $|e^{-ix}-1+ix|\le x^2/2$ for real $x$, we derive
\begin{align*}
\big|\tilde t(\gamma)+i\gamma b_0^2\big| &=\Big|\sum_{j=1}^d c_j^2(\lambda_j-\ell_0)\bigl(e^{-i\gamma(\lambda_j-\ell_0)}-1+i\gamma(\lambda_j-\ell_0)\bigr)\Big|\\ 
&\le\ \frac{\gamma^2}{2}\sum_{j=1}^d c_j^2|\lambda_j-\ell_0|^3\ \le\ \frac{\gamma^2}{2}\,L\,b_0^2 ,
\end{align*}
where we used $|\lambda_j-\ell_0|\le L$. Therefore, 
\[
|t(\gamma)|\ge|\gamma|b_0^2-\gamma^2Lb_0^2/2\ge|\gamma|b_0^2/2 \qquad\text{for}\qquad |\gamma|\le1/L.
\]

Under the lattice condition, we have $e^{-i(\gamma+2\pi/\eta)\lambda_j}=e^{-2\pi i\lambda_1/\eta}e^{-i\gamma\lambda_j}$ for every $j$, so $|s|$ and $|t|$ are $2\pi/\eta$-periodic. Hence, the expectation defining $M$ can be taken over $\gamma\sim\Unif[-\pi/\eta,\pi/\eta)$, whose density is $\eta/(2\pi)$. Since $d\ge2$, we have $L\ge\eta$ and the window $\{|\gamma|\le1/L\}$ is contained in $[-\pi/\eta,\pi/\eta)$. Restricting the expectation to this window and using the two bounds above, we obtain
\[
M\ \ge\ \frac{\eta}{2\pi}\int_{-1/L}^{1/L}2\cdot\Big(\frac78\Big)^2\cdot\frac{\gamma^2b_0^4}{4}\,d\gamma=\frac{\eta}{2\pi}\cdot\frac{49\,b_0^4}{128}\cdot\frac{2}{3L^3}=\frac{49}{384\pi}\,\frac{\eta\,b_0^4}{L^3}\,.
\]
This proves the proposition.
\end{proof}

\subsection{\texorpdfstring{Proof of Theorem~\ref{thm:intro-variance}}{Proof of Theorem 1.1}}\label{sec:main}

In this subsection, we prove the depth-independent variance bound, using the mean return probability $\bar q_{p-2}$ of Lemma~\ref{lem:populations}.
Recall $M$ from Definition~\ref{def:charfun}.

\begin{theorem}\label{thm:main}
Let $V$, $H$, and $|\xi\rangle$ be as in Section~\ref{sec:setting}, with $d\ge2$. Assume the lattice condition with spacing $\eta$ and the parameter law of Definition~\ref{def:law}. Then 
we have
\[
\Var_1\ \ge\ M,\qquad \Var_p\ \ge\ \E\,|o_{p-1}|^4\cdot M\ \ge\ \bar q_{p-2}^{\,2}\,M\ \ge\ \frac{M}{d^2}\quad(p\ge2).
\]
In particular, for every $p\ge1$, we have
\[
\Var_p\ \ge\ \frac{49}{384\pi}\,\frac{\eta\,b_0^4}{d^2L^3}.
\]
\end{theorem}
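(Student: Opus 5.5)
The plan is to peel off the last mixer with the law of total variance, then use the randomness of the second-to-last mixing angle $\beta_{p-1}$ through Lemma~\ref{lem:phase}. The case $p=1$ is exactly Proposition~\ref{prop:onelayer}, so assume $p\ge2$.

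\emph{Step 1: condition on everything except $\beta_p$.} By the law of total variance, $\Var_p\ge\E\bigl[\Var_{\beta_p}[\ell_p]\bigr]$. Since $|\psi_p\rangle=R(\beta_p)|\varphi_p\rangle$, Lemma~\ref{lem:lastmixer} gives
\[
\Var_p\ \ge\ 2\,\E\bigl[\,|o_p|^2\,b_0^2|\langle e_1|\varphi_p\rangle|^2\,\bigr].
\]

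\emph{Step 2: expose $\beta_{p-1}$.} Write $w:=e^{-i\beta_{p-1}}-1$. Then
\[
|\varphi_p\rangle=Z(\gamma_p)R(\beta_{p-1})|\varphi_{p-1}\rangle=Z(\gamma_p)|\varphi_{p-1}\rangle+w\,o_{p-1}\,Z(\gamma_p)|\xi\rangle .
\]
Taking overlaps and using Definition~\ref{def:charfun} and Definition~\ref{def:data} yields
\[
o_p=A+w\,o_{p-1}s(\gamma_p),\qquad b_0\langle e_1|\varphi_p\rangle=A'+w\,o_{p-1}t(\gamma_p),
\]
where
\[
A=\langle\xi|Z(\gamma_p)|\varphi_{p-1}\rangle,\qquad A'=b_0\langle e_1|Z(\gamma_p)|\varphi_{p-1}\rangle .
\]
Here $b_0\langle e_1|Z(\gamma)|\xi\rangle=\langle(H-\ell_0)\xi|Z(\gamma)|\xi\rangle=t(\gamma)$.

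The quantities $A,A',o_{p-1}$ and $\gamma_p$ do not depend on $\beta_{p-1}$. Moreover, $\vartheta:=-\beta_{p-1}$ is uniform modulo $2\pi$, and $e^{i\vartheta}-1=w$. Conditioning on all other angles, Lemma~\ref{lem:phase} with $B=o_{p-1}s(\gamma_p)$ and $B'=o_{p-1}t(\gamma_p)$ gives
\[
\E_{\beta_{p-1}}\bigl[|o_p|^2\,b_0^2|\langle e_1|\varphi_p\rangle|^2\bigr]\ \ge\ |o_{p-1}|^4\,|s(\gamma_p)|^2|t(\gamma_p)|^2 .
\]

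\emph{Step 3: average the rest.} The quantity $o_{p-1}$ depends only on $\gamma_1,\dots,\gamma_{p-1}$ and $\beta_1,\dots,\beta_{p-2}$, so it is independent of $\gamma_p$. The expectation therefore factorizes:
\[
\Var_p\ \ge\ \E|o_{p-1}|^4\cdot 2\,\E_{\gamma_p}\bigl[|s|^2|t|^2\bigr]=\E|o_{p-1}|^4\cdot M .
\]
By Jensen (or Cauchy--Schwarz), $\E|o_{p-1}|^4\ge(\E|o_{p-1}|^2)^2$. Lemma~\ref{lem:populations}(2) gives $\E|o_{p-1}|^2=\bar q_{p-2}$, and Lemma~\ref{lem:populations}(3) gives $\bar q_{p-2}\ge1/d$. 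Hence $\Var_p\ge M/d^2$.

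\emph{Step 4: explicit bound.} Combining $\Var_p\ge M/d^2$ with the bound on $M$ in Proposition~\ref{prop:onelayer} gives the explicit estimate for $p\ge2$. For $p=1$, the estimate follows directly from $\Var_1\ge M$ and $d^2\ge1$.

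The main point requiring care is Step 2. One must check that the perturbation added by the $(p-1)$-st mixer enters both overlaps affinely in the \emph{same} phase $e^{-i\beta_{p-1}}$, with coefficients $o_{p-1}s(\gamma_p)$ and $o_{p-1}t(\gamma_p)$. One must also check that these coefficients and the remainders $A,A'$ are independent of $\beta_{p-1}$, so that Lemma~\ref{lem:phase} applies conditionally.
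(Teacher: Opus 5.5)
Your proposal is correct and follows essentially the same route as the paper's proof. The steps match: the law of total variance with Lemma~\ref{lem:lastmixer}, the affine decomposition in $e^{-i\beta_{p-1}}$ fed into Lemma~\ref{lem:phase} with $\vartheta=-\beta_{p-1}$, the factorization using the independence of $\gamma_p$ from $o_{p-1}$, and then Jensen and Lemma~\ref{lem:populations}; the only cosmetic difference is that you absorb $b_0$ into $A'$ and $B'$, so that $B'=o_{p-1}t(\gamma_p)$ from the start.
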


\begin{proof}
The case $p=1$ is Proposition~\ref{prop:onelayer}. Let $p\ge2$, and denote by $\mathcal F$ the collection of parameters $(\gamma_1,\dots,\gamma_{p-1},\beta_1,\dots,\beta_{p-2})$. The state $|\varphi_{p-1}\rangle$ and the overlap $o_{p-1}$ are functions of $\mathcal F$, while $\beta_{p-1}$, $\gamma_p$, and $\beta_p$ are independent of $\mathcal F$ and of each other.
By the law of total variance and Lemma~\ref{lem:lastmixer} applied to $|\varphi\rangle=|\varphi_p\rangle$, we have
\[
\Var_p\ \ge\ \E\bigl[\Var_{\beta_p}[\ell_p]\bigr]=2b_0^2\,\E\Big[|\langle\xi|\varphi_p\rangle|^2\,|\langle e_1|\varphi_p\rangle|^2\Big],
\]
where the outer expectation is over $\mathcal F$, $\beta_{p-1}$, and $\gamma_p$.

Since $|\varphi_p\rangle=Z(\gamma_p)|\psi_{p-1}\rangle$ and
\[
|\psi_{p-1}\rangle=R(\beta_{p-1})|\varphi_{p-1}\rangle=|\varphi_{p-1}\rangle+(e^{-i\beta_{p-1}}-1)\,o_{p-1}|\xi\rangle,
\]
we have
\[
\langle\xi|\varphi_p\rangle=A+(e^{-i\beta_{p-1}}-1)B,\qquad \langle e_1|\varphi_p\rangle=A'+(e^{-i\beta_{p-1}}-1)B',
\]
with
\[
A:=\langle\xi|Z(\gamma_p)|\varphi_{p-1}\rangle,\quad A':=\langle e_1|Z(\gamma_p)|\varphi_{p-1}\rangle,\quad B:=o_{p-1}\,s(\gamma_p),\quad B':=o_{p-1}\,\langle e_1|Z(\gamma_p)|\xi\rangle .
\]
None of $A,A',B,B'$ depends on $\beta_{p-1}$. Conditioning on $\mathcal F$ and $\gamma_p$ and applying Lemma~\ref{lem:phase} with $\vartheta=-\beta_{p-1}$, we obtain
\[
\E_{\beta_{p-1}}\Big[|\langle\xi|\varphi_p\rangle|^2|\langle e_1|\varphi_p\rangle|^2\Big]\ \ge\ |B|^2|B'|^2=|o_{p-1}|^4\,|s(\gamma_p)|^2\,|\langle e_1|Z(\gamma_p)|\xi\rangle|^2 .
\]

Since $b_0\langle e_1|Z(\gamma_p)|\xi\rangle=t(\gamma_p)$ and $\gamma_p$ is independent of $\mathcal F$, we take the expectation over $\gamma_p$ and then over $\mathcal F$ in the preceding bound. Inserting the result into the variance estimate gives
\[
\Var_p\ \ge\ \E\bigl[|o_{p-1}|^4\bigr]\cdot2\,\E_{\gamma\sim\Unif[0,2\pi/\eta)}\bigl[|s(\gamma)|^2|t(\gamma)|^2\bigr]=\E\bigl[|o_{p-1}|^4\bigr]\cdot M .
\]
By Jensen's inequality \cite{Dur19}, we have
\[
\E\bigl[|o_{p-1}|^4\bigr]\ge\bigl(\E\bigl[|o_{p-1}|^2\bigr]\bigr)^2 .
\]
By Lemma~\ref{lem:populations}, the second moment is equal to $\E[|o_{p-1}|^2]=\bar q_{p-2}$, and $\bar q_{p-2}\ge1/d$. 

Combining the above inequalities with Proposition~\ref{prop:onelayer} proves the claims for $p\ge2$. For $p=1$, we use $\Var_1\ge M\ge M/d^2$.
\end{proof}

The last assertion of Theorem~\ref{thm:main} is precisely Theorem~\ref{thm:intro-variance}. We remark that the constants in Theorem~\ref{thm:main} have not been optimized. The intermediate bound $\Var_p\ge\E[|o_{p-1}|^4] M$ is the sharpest statement given by the proof, and $\E[|o_{p-1}|^4]$ can be computed from the parameter law for any given instance.

\section{\texorpdfstring{Applications of Theorem~\ref{thm:intro-variance}}{Applications of Theorem 1.1}}\label{sec:applications}

In this section, we first derive the corresponding variance bound for the final mixer gradient and state the scaling condition for a family of instances. We then specialize the bounds to integer-valued local objective functions in Subsection~\ref{sec:local} and to MaxCut in Subsection~\ref{sec:maxcut}.

\subsection{Variance of the derivative of the loss}\label{sec:vargrad}
We continue to use the notation and assumptions of Section~\ref{sec:setting}. For $p\ge1$, we consider the partial derivative of the loss function with respect to the last mixer angle,
\[
 g_p:=\frac{\partial\ell_p}{\partial\beta_p}.
\]

\begin{corollary}\label{cor:gradient}
Under the assumptions of Theorem~\ref{thm:main}, we have
\[
 \Var[g_1]=M,
\]
and, for $p\ge2$,
\[
 \Var[g_p]
 \ge \E\bigl[|o_{p-1}|^4\bigr] M
 \ge \bar q_{p-2}^{\,2}M
 \ge \frac{M}{d^2}.
\]
In particular, for every $p\ge1$, we have
\[
 \Var[g_p]
 \ge
 \frac{49}{384\pi}\,\frac{\eta b_0^4}{d^2L^3}.
\]
\end{corollary}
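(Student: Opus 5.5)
Since $g_p$ depends on $\beta_p$ only through the explicit formula of Lemma~\ref{lem:lastmixer}, I will compute it exactly and then reuse the proof of Theorem~\ref{thm:main} verbatim.

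\textbf{Step 1: formula for $g_p$.} Apply Lemma~\ref{lem:lastmixer} with $|\varphi\rangle=|\varphi_p\rangle$, which does not depend on $\beta_p$. This gives
\[
\ell_p=\langle\varphi_p|H|\varphi_p\rangle+2\re\bigl[(e^{-i\beta_p}-1)z_p\bigr],\qquad z_p:=b_0\,o_p\,\langle\varphi_p|e_1\rangle .
\]
Differentiating in $\beta_p$ yields
\[
g_p=2\re\bigl[-ie^{-i\beta_p}z_p\bigr]=2\,\operatorname{Im}\bigl[e^{-i\beta_p}z_p\bigr].
\]
Conditionally on all parameters except $\beta_p$, this is $2|z_p|\sin(\theta-\beta_p)$ with $z_p=|z_p|e^{i\theta}$. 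Its conditional mean over $\beta_p$ is $0$ and its conditional second moment is $2|z_p|^2$.

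\textbf{Step 2: exact variance.} Because the conditional mean vanishes, the unconditional mean of $g_p$ is $0$. The law of total variance therefore has no "variance of the conditional mean" term, and we get the equality
\[
\Var[g_p]=\E\bigl[2|z_p|^2\bigr]=2b_0^2\,\E\bigl[|\langle\xi|\varphi_p\rangle|^2|\langle e_1|\varphi_p\rangle|^2\bigr].
\]
This is exactly the quantity that served as the lower bound for $\Var_p$ in the proofs of Proposition~\ref{prop:onelayer} and Theorem~\ref{thm:main}. For $p=1$ we have $\langle\xi|\varphi_1\rangle=s(\gamma_1)$ and $b_0\langle e_1|\varphi_1\rangle=t(\gamma_1)$, so $\Var[g_1]=2\E|s|^2|t|^2=M$, which is the claimed equality.

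\textbf{Step 3: $p\ge2$.} Repeat the argument of Theorem~\ref{thm:main} word for word. Write $\langle\xi|\varphi_p\rangle$ and $\langle e_1|\varphi_p\rangle$ as affine functions of $e^{-i\beta_{p-1}}-1$, and apply Lemma~\ref{lem:phase} to obtain the factor $|o_{p-1}|^4|s(\gamma_p)|^2|t(\gamma_p)|^2/b_0^2$. Then average over $\gamma_p$ using its independence from the earlier parameters. Next use Jensen's inequality together with Lemma~\ref{lem:populations} to get $\E|o_{p-1}|^4\ge\bar q_{p-2}^2\ge 1/d^2$. Finally, Proposition~\ref{prop:onelayer} supplies the explicit constant. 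For $p=1$ the final bound follows from $M\ge M/d^2$.

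There is no real obstacle here. The only point needing care is noting that $\E[g_p\mid\text{other parameters}]=0$, which upgrades the inequality to an equality at $p=1$.
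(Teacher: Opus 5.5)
Your proposal is correct and follows the paper's proof. You compute $g_p=2\re[-ie^{-i\beta_p}z_p]$ from Lemma~\ref{lem:lastmixer}, note that its conditional mean over $\beta_p$ is $0$ and its conditional second moment is $2|z_p|^2=\Var_{\beta_p}[\ell_p]$, and then rerun the argument of Theorem~\ref{thm:main}. Your remark that the vanishing conditional mean turns the law of total variance into an equality is exactly what justifies $\Var[g_1]=M$ rather than only $\ge$; the paper leaves this step implicit.
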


\begin{proof}
Condition on all parameters except $\beta_p$ and put
\[
 z_p:=b_0\langle\xi|\varphi_p\rangle\langle\varphi_p|e_1\rangle.
\]
By Lemma~\ref{lem:lastmixer}, we have
\[
 \ell_p=C_p+2\re\bigl[(e^{-i\beta_p}-1)z_p\bigr],
 \qquad
 g_p=2\re\bigl[-ie^{-i\beta_p}z_p\bigr],
\]
where $C_p$ and $z_p$ do not depend on $\beta_p$. Therefore,
\[
 \E_{\beta_p}[g_p]=0,
 \qquad
 \E_{\beta_p}[g_p^2]=2|z_p|^2
 =\Var_{\beta_p}[\ell_p].
\]
The argument in the proof of Theorem~\ref{thm:main}, beginning with Lemma~\ref{lem:lastmixer}, applies without change to \(\Var[g_p]\). For $p=1$, averaging $2|z_1|^2$ over $\gamma_1$ gives $M$. For $p\ge2$, the remaining inequalities are exactly those in Theorem~\ref{thm:main}.
\end{proof}

\begin{remark}\label{rem:family-scaling}
The above lower bounds are instance dependent. For a family $(H_n,|\xi_n\rangle)$ and the normalized loss $\widehat\ell_{p,n}:=\ell_{p,n}/\|H_n\|$, Corollary~\ref{cor:gradient} gives
\[
 \Var\Bigl[
 \frac{\partial\widehat\ell_{p,n}}{\partial\beta_p}
 \Bigr]
 \ge
 \frac{49}{384\pi}\,
 \frac{\eta_n b_{0,n}^4}
 {d_n^2L_n^3\|H_n\|^2}.
\]
Thus, an inverse-polynomial conclusion for a growing family requires the quantity on the right to be bounded below by an inverse polynomial in the problem size. The lattice condition alone does not imply this. For example, on $n$ qubits, take
\[
 |\xi_n\rangle=|+\rangle^{\otimes n},
 \qquad
 H_n=|1^n\rangle\langle1^n|,
 \qquad q_n:=2^{-n}.
\]
Then we have $d=2$ and $L=\eta=\|H_n\|=1$, but a direct computation of $M$ from Definition~\ref{def:charfun} gives
\[
 \Var\Bigl[
 \frac{\partial\ell_{1,n}}{\partial\beta_1}
 \Bigr]
 =4q_n^2(1-q_n)^2\bigl(1-3q_n(1-q_n)\bigr)
 =\Theta(4^{-n}).
\]
\end{remark}

\subsection{Local objective functions}\label{sec:local}

In this subsection, we specialize Theorem~\ref{thm:intro-variance} to integer-valued local objective functions and compute the constants. The result is a finite-depth counterpart of Theorem~IV.4 of \cite{TNB25}. Notice that the lattice condition with a fixed $\eta$ can be reduced to the integer-valued case by rescaling the Hamiltonian $H$.

Let $n\ge1$, let $\Bool^n:=\{0,1\}^n$, and $s\ge1$ be an integer. An objective function $F\colon\Bool^n\to\R$ is called \emph{$s$-local} if
\begin{equation}\label{eq:slocal}
F(x)=\sum_{j=1}^Tf_j(x_{Q_j}),\qquad x\in\Bool^n,
\end{equation}
where $Q_1,\dots,Q_T\subseteq\{1,\dots,n\}$ are distinct sets with $|Q_j|=s$, $x_{Q_j}:=(x_i)_{i\in Q_j}\in\{0,1\}^{Q_j}$, and $f_j\colon\{0,1\}^{Q_j}\to\R$. We call $T$ the number of terms of $F$, and put
\[
K:=\max_{1\le j\le T}\ \max_{z\in\{0,1\}^{Q_j}}|f_j(z)|,\qquad
\lambda_{\max}:=\max_{x\in\Bool^n}|F(x)|.
\]
Consider the Hamiltonian
\[
H:=\sum_{x\in\Bool^n}F(x)|x\rangle\langle x|
\]
on the Hilbert space $V=(\C^2)^{\otimes n}$, which is diagonal in the computational basis $\{|x\rangle\}$ of $V$ with eigenvalues given by $F(x)$.

Let $|\xi\rangle\in V$ be a fixed unit vector, and $X$ be a random element of $\Bool^n$ with $\Pr[X=x]=|\langle x|\xi\rangle|^2$. In the notation of Section~\ref{sec:setting}, the eigenvalues $\lambda_1<\dots<\lambda_d$ are the values of $F$ attained with positive probability by $F(X)$, and $c_j^2=\Pr[F(X)=\lambda_j]$. Consequently, we have $\ell_0=\E[F(X)]$ and $b_0^2=\Var[F(X)]$. The state $|\xi\rangle$ is an eigenvector of $H$ if and only if $d=1$. The locality parameter $s$ should not be confused with the function $s(\gamma)$ of Definition~\ref{def:charfun}, which always appears with its argument.

\begin{theorem}\label{thm:local}
Let $F\colon\Bool^n\to\Z$ be $s$-local as in \eqref{eq:slocal} with $T$ terms, and let $H$, $|\xi\rangle$, $K$, $\lambda_{\max}$, and $b_0$ be as above. Assume that $|\xi\rangle$ is not an eigenvector of $H$, and let the parameters be sampled as in Definition~\ref{def:law} with $\eta=1$. Put $\hat\ell_p:=\ell_p/\lambda_{\max}$. Then the following hold for every depth $p\ge1$.
\begin{enumerate}
\item We have
\[
\Var[\ell_p]\ \ge\ \frac{49}{384\pi}\,\frac{b_0^4}{(2KT+1)^2(2KT)^3}
\qquad\text{and}\qquad
\Var[\hat\ell_p]\ \ge\ \frac{49}{27648\pi}\,\frac{b_0^4}{(KT)^7}.
\]
\item If every $f_j$ takes values in $[0,K]$, then
\[
\Var[\ell_p]\ \ge\ \frac{49}{384\pi}\,\frac{b_0^4}{(KT+1)^2(KT)^3}
\qquad\text{and}\qquad
\Var[\hat\ell_p]\ \ge\ \frac{49}{1536\pi}\,\frac{b_0^4}{(KT)^7}.
\]
\item If $|\xi\rangle=|+\rangle^{\otimes n}$, then $b_0^2\ge4^{-s}$, and consequently
\[
\Var[\hat\ell_p]\ \ge\ \frac{49}{27648\pi}\,\frac{1}{16^{s}\,(KT)^7}.
\]
\end{enumerate}
The same lower bounds hold with $\ell_p$ and $\hat\ell_p$ replaced by $\partial\ell_p/\partial\beta_p$ and $\partial\hat\ell_p/\partial\beta_p$, respectively.
\end{theorem}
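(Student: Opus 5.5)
The plan is to apply Theorem~\ref{thm:main} (equivalently Theorem~\ref{thm:intro-variance}) with $\eta=1$, and then bound $d$ and $L$ in terms of $K$ and $T$. For the derivative statement, Corollary~\ref{cor:gradient} gives the same bounds. For the normalized loss we use $\Var[\hat\ell_p]=\Var[\ell_p]/\lambda_{\max}^2$ together with an upper bound on $\lambda_{\max}$.

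First, since $F$ is integer-valued, the lattice condition holds with $\eta=1$. Since $|\xi\rangle$ is not an eigenvector, we have $d\ge2$. Each $|f_j|\le K$, so $|F(x)|\le KT$, which gives $\lambda_{\max}\le KT$.

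For part (1), all values of $F$ lie in the integer interval $[-KT,KT]$. Hence $L=\lambda_d-\lambda_1\le 2KT$, and since the $\lambda_j$ are distinct integers, $d\le 2KT+1$. Inserting these into Theorem~\ref{thm:main} gives the first inequality. For the normalized version, divide by $\lambda_{\max}^2\le (KT)^2$ and use $2KT+1\le 3KT$; this is valid because $KT\ge1$, which follows from $d\ge2$ and integrality. The resulting bound is
\[
\frac{49}{384\pi}\cdot\frac{b_0^4}{9(KT)^2\cdot 8(KT)^3\cdot (KT)^2},
\]
and $384\cdot 72=27648$.

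For part (2), the values of $F$ lie in $[0,KT]$, so $L\le KT$, $d\le KT+1\le 2KT$, and $\lambda_{\max}\le KT$. This gives the constant $384\cdot4=1536$.

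Part (3) is the only step that needs more than bookkeeping: we must show $\Var[F(X)]\ge 4^{-s}$ when $X$ is uniform on $\Bool^n$. The approach is the Fourier--Walsh expansion. Write $F=\sum_{S}\hat F(S)\chi_S$ with $\chi_S(x)=(-1)^{\sum_{i\in S}x_i}$. Then $\Var[F(X)]=\sum_{S\ne\emptyset}\hat F(S)^2$.

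Each $f_j$ depends only on $x_{Q_j}$ with $|Q_j|=s$, so $F$ has degree at most $s$. Every coefficient $\hat F(S)=2^{-n}\sum_x F(x)\chi_S(x)$ is then a multiple of $2^{-s}$. To see this, write $\hat F(S)=\E[F(X)\chi_S(X)]$ and note that $F\chi_S$ summed over the coordinates outside $Q_j$ reduces term by term: for each $f_j$, only $S\subseteq Q_j$ contributes, and that contribution is $2^{-s}\sum_{z}f_j(z)\chi_S(z)\in 2^{-s}\Z$.

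Since $F$ is nonconstant on the support (as $d\ge2$), some $\hat F(S)$ with $S\ne\emptyset$ is nonzero, hence satisfies $|\hat F(S)|\ge 2^{-s}$. This gives $b_0^2\ge4^{-s}$. Substituting $b_0^4\ge 16^{-s}$ into part (1) completes part (3).

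The main obstacle I expect is this integrality argument. I would check it carefully: integrality of $F$ alone does not force each $f_j$ to be integer-valued, so the claim that each $f_j$ contributes an element of $2^{-s}\Z$ may fail. The fallback is to argue directly on $F$. Because $F$ has degree at most $s$, each $\hat F(S)$ can be recovered from $F$ restricted to a subcube of dimension $|S|\le s$, with the other coordinates fixed and averaged. Concretely, $\hat F(S)=2^{-|S|}\sum_{y}F(y,\text{fixed})\chi_S(y)$, and this holds for any fixing of the remaining coordinates because $F$ has no monomials containing $S$ properly beyond degree $s$. This expresses $\hat F(S)$ via integer values of $F$, so it lies in $2^{-s}\Z$.
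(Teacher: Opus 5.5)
Your handling of parts (1) and (2), the normalization by $\lambda_{\max}$ (including $KT\ge1$), and the derivative statement is correct and matches the paper. The gap is in part (3), at its only nontrivial step, $\hat F(S)\in2^{-s}\Z$. You were right to drop the term-by-term argument, since the $f_j$ are only real-valued. However, the fallback rests on a false identity. Fix the coordinates outside $S$ at $w$ and use $\hat F(R)=0$ for $|R|>s$. Then
\[
2^{-|S|}\sum_{y\in\{0,1\}^S}F(y,w)\,\chi_S(y)=\sum_{S\subseteq R,\ |R|\le s}\hat F(R)\,\chi_{R\setminus S}(w).
\]
This equals $\hat F(S)$ for every $w$ only if $\hat F(R)=0$ for all $R\supsetneq S$. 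The degree bound guarantees that only when $|S|=s$. For example, take $F(x)=x_1x_2$, which is $2$-local with $T=K=1$. It has $\hat F(\{1\})=-\tfrac14$, whereas your restricted sums for $S=\{1\}$ equal $0$ at $x_2=0$ and $-\tfrac12$ at $x_2=1$.

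The conclusion is still true, and your setup can be repaired by downward induction on $|S|$. For $|S|=s$ the display reduces to your identity and gives $\hat F(S)\in2^{-s}\Z$. For $|S|<s$, the left side lies in $2^{-|S|}\Z\subseteq2^{-s}\Z$ because $F$ is integer-valued. Every term on the right with $R\supsetneq S$ lies in $2^{-s}\Z$ by the induction hypothesis, so $\hat F(S)\in2^{-s}\Z$ as well.

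The paper obtains the same integrality globally. It expands $F$ in the $\{0,1\}$-monomials $x^U$ and shows the coefficients are integers via the unitriangular recursion $a_U=F(\one_U)-\sum_{V\subsetneq U}a_V$. Locality and uniqueness give $a_U=0$ for $|U|>s$. Substituting $x^U=2^{-|U|}\prod_{i\in U}(1-y_i)$ then gives $2^s\hat F(S)\in\Z$. With either repair, the rest of your part (3) goes through unchanged.
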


\begin{proof}
As $F$ takes integer values, the lattice condition holds with $\eta=1$, and since $|\xi\rangle$ is not an eigenvector of $H$, we have $d\ge2$. Hence, Theorem~\ref{thm:intro-variance} and Corollary~\ref{cor:gradient} apply. The gradient bounds follow from Corollary~\ref{cor:gradient} in the same way as the bounds for \(\ell_p\) and \(\hat\ell_p\) follow from Theorem~\ref{thm:intro-variance}. It remains to bound $d$, $L$, $\lambda_{\max}$, and $b_0$.

(1) By Eq.~\eqref{eq:slocal} and the definition of $K$, we have $|F(x)|\le KT$ for all $x\in\Bool^n$. Hence, every $\lambda_j$ lies in $[-KT,KT]$, so $L\le2KT$ and $\lambda_{\max}\le KT$. The $d$ values $\lambda_j$ are distinct integers in an interval of length $L$, so $d\le L+1\le2KT+1$. Inserting $\eta=1$ and these bounds into Theorem~\ref{thm:intro-variance} gives the first inequality. Since $d\ge2$, the function $F$ takes at least two distinct integer values, one of which is nonzero, so $\lambda_{\max}\ge1$. Therefore, we have $KT\ge1$. Since \(\Var[\hat\ell_p]=\lambda_{\max}^{-2}\Var[\ell_p]\ge (KT)^{-2}\Var[\ell_p]\) and \(2KT+1\le 3KT\), the second inequality follows from the first.

(2) If every $f_j$ takes values in $[0,K]$, then $0\le F(x)\le KT$ for all $x$, so $L\le KT$, $d\le KT+1$, and $\lambda_{\max}\le KT$. Theorem~\ref{thm:intro-variance} gives the first inequality. Since \(KT\ge1\), we have \((KT+1)^2\le4(KT)^2\), and the second inequality follows as in part (1).

(3) Let $|\xi\rangle=|+\rangle^{\otimes n}$, so that $X$ is uniformly distributed on $\Bool^n$. For $U\subseteq\{1,\dots,n\}$, let $x^U:=\prod_{i\in U}x_i$ and let $\one_U\in\Bool^n$ be the indicator vector of $U$. The $2^n$ monomials $x^U$ form a basis of the space of functions on $\Bool^n$. Indeed, we have $x^U(\one_V)=1$ if $U\subseteq V$ and $x^U(\one_V)=0$ otherwise, so the matrix $\bigl(x^U(\one_V)\bigr)_{U,V}$ is triangular with unit diagonal when the subsets are listed in any order of nondecreasing cardinality. Hence, $F$ has a unique expansion
\begin{equation}\label{eq:monomial}
F(x)=\sum_{U\subseteq\{1,\dots,n\}}a_U\,x^U,\qquad x\in\Bool^n,
\end{equation}
with real coefficients $a_U$. Evaluating \eqref{eq:monomial} at $x=\one_U$ gives $a_U=F(\one_U)-\sum_{V\subsetneq U}a_V$, so by induction on $|U|$ every $a_U$ is an integer. Each term $f_j(x_{Q_j})$ in \eqref{eq:slocal} is a function of the coordinates in $Q_j$ alone, so its expansion involves only monomials $x^U$ with $U\subseteq Q_j$. By the uniqueness of \eqref{eq:monomial}, we have $a_U=0$ whenever $|U|>s$.

Now, put $y_i:=1-2x_i\in\{-1,1\}$ and $\chi_S(y):=\prod_{i\in S}y_i$ for $S\subseteq\{1,\dots,n\}$. Since $x_i=(1-y_i)/2$, we have
\[
x^U=2^{-|U|}\prod_{i\in U}(1-y_i)=2^{-|U|}\sum_{S\subseteq U}(-1)^{|S|}\chi_S(y),
\]
and therefore
\[
F(x)=\sum_{S\subseteq\{1,\dots,n\}}\widehat F(S)\,\chi_S(y),\qquad
\widehat F(S):=(-1)^{|S|}\sum_{U\supseteq S,\ |U|\le s}a_U\,2^{-|U|},
\]
which is the Fourier expansion of $F$ on the Boolean cube \cite[Ch.~1]{ODonnell14}.
Since every $a_U$ is an integer, the number $2^s\widehat F(S)=(-1)^{|S|}\sum_{U\supseteq S,\,|U|\le s}a_U2^{\,s-|U|}$ is an integer for every $S$. When $X$ is uniform on $\Bool^n$, the vector $Y:=(1-2X_i)_i$ is uniform on $\{-1,1\}^n$. Hence, the expectation $\E[\chi_S(Y)\chi_{S'}(Y)]$ is equal to $1$ if $S=S'$ and $0$ otherwise, because for $S\ne S'$ the product $\chi_S\chi_{S'}=\chi_{S\triangle S'}$ has mean zero. Hence, we have $\E\,F(X)=\widehat F(\emptyset)$ and
\[
b_0^2=\Var[F(X)]=\sum_{S\ne\emptyset}\widehat F(S)^2 .
\]
Since $d\ge2$, the function $F$ is not constant, so $\widehat F(S)\ne0$ for some $S\ne\emptyset$. Hence \(|\widehat F(S)|\ge2^{-s}\) for some \(S\ne\emptyset\), and therefore \(b_0^2\ge4^{-s}\). Inserting $b_0^4\ge16^{-s}$ into the second inequality of part (1) proves the displayed bound.
\end{proof}

Note that $T\le\binom ns\le n^s/s!$, and Theorem~\ref{thm:local}(3) gives
\begin{equation}\label{eq:local-n}
\Var[\hat\ell_p]\ \ge\ \frac{49\,(s!)^7}{27648\pi\,16^{s}K^7\,n^{7s}}
\qquad(p\ge1)
\end{equation}
for the uniform-superposition initial state $|\xi\rangle=|+\rangle^{\otimes n}$. For the same variance, Theorem~IV.4 of \cite{TNB25} gives the lower bound
\[
\frac{(s!)^2}{12\,K^2\,n^{2s}}
\]
when the depth $p$ is sufficiently large, without an explicit threshold for $p$. The bound \eqref{eq:local-n} holds at all depths, at the cost of a larger power of $n$.

We note that two types of dependence in Theorem~\ref{thm:local} are unavoidable in the finite-depth setting.  First, the dependence on $s$ in part (3) cannot be removed. For $F(x)=x^S$ with $|S|=s$, we have $T=K=1$ and $b_0^2=2^{-s}(1-2^{-s})$, so $b_0^2$ decays exponentially in $s$. For $s=1$, this value equals $4^{-s}$, so the bound $b_0^2\ge4^{-s}$ is attained. Second, for a general initial state, the factor $b_0^4$ in parts (1) and (2) cannot be replaced by a quantity depending only on $s$, $K$, and $T$. For $F(x)=x_1$ and $q\in(0,1)$, put $|\xi\rangle:=\bigl(\sqrt{1-q}\,|0\rangle+\sqrt q\,|1\rangle\bigr)\otimes|+\rangle^{\otimes(n-1)}$. Here $s=1$ and $K=T=1$. Moreover, we have $d=2$, the carried eigenvalues are $0$ and $1$ with weights $1-q$ and $q$, and $b_0^2=q(1-q)$. The calculation of $M$ in Remark~\ref{rem:family-scaling} depends only on these data and gives $\Var[\partial\ell_1/\partial\beta_1]=4q^2(1-q)^2\bigl(1-3q(1-q)\bigr)$, which tends to $0$ as $q\to0$. The loss itself behaves in the same way. By Lemma~\ref{lem:lastmixer} applied to $|\varphi_1\rangle=Z(\gamma_1)|\xi\rangle$, for which $\langle\varphi_1|H|\varphi_1\rangle=\ell_0$, we have $|\ell_1-\ell_0|\le4b_0$ for every instance, so $\Var[\ell_1]\le16b_0^2=16q(1-q)$ in the present example. In contrast, the bound of Theorem~IV.4 of \cite{TNB25} does not depend on the initial state, because the large-depth variance of \cite{TNB25} depends only on the carried eigenvalues.

\subsection{MaxCut}\label{sec:maxcut}

In this subsection, we apply Theorem~\ref{thm:local} to MaxCut \cite{GW95,FGG14} with the uniform-superposition initial state $|\xi\rangle=|+\rangle^{\otimes n}=2^{-n/2}\sum_x|x\rangle$, and compute the constants.

Let $\Gamma$ be a simple graph (without loops and without multiple edges) with $n$ vertices and $m\ge1$ edges. For an assignment $x\in\{0,1\}^n$, denote by $f(x)$ be the number of cut edges, i.e., the number of edges $\{i,j\}$ whose endpoints satisfy $x_i\ne x_j$. As before, consider the Hamiltonian $H=\sum_xf(x)|x\rangle\langle x|$ acting on the Hilbert space $V=(\C^2)^{\otimes n}$. The eigenspaces of $H$ with $\Pi_\lambda|\xi\rangle\ne0$ are indexed by the cut values $\lambda$ attained by $f$, and $c_\lambda^2=\Pr[f(X)=\lambda]$ for $X$ uniform in $\{0,1\}^n$. For an edge $e=\{i,j\}$, let $I_e(x):=x_i\oplus x_j$, so that $f(x)=\sum_eI_e(x)$. Thus, in the notation of Subsection~\ref{sec:local}, the function $F=f$ is $2$-local with $T=m$ terms, the local terms are the functions $I_e$, each taking values in $\{0,1\}$, and $K=1$.

\begin{corollary}\label{cor:maxcut}
In the setting above, with parameters sampled as in Definition~\ref{def:law} with $\eta=1$, for every depth $p\ge1$, we have
\[
\Var[\ell_p]\ \ge\ \frac{49}{6144\pi}\,\frac{1}{m\,(m+1)^2}\ \ge\ \frac{49}{24576\pi}\,\frac{1}{m^3}\,.
\]
For the normalized loss $\hat\ell_p:=\ell_p/\lambda_{\max}$, where $\lambda_{\max}$ is the maximum cut value, we have
\[
\Var[\hat\ell_p]\ \ge\ \frac{49}{24576\pi}\,\frac{1}{m^5}\,.
\]
{The same two lower bounds hold with $\ell_p$ and $\hat\ell_p$ replaced by $\partial\ell_p/\partial\beta_p$ and $\partial\hat\ell_p/\partial\beta_p$, respectively.}
\end{corollary}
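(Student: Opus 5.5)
The plan is to apply Theorem~\ref{thm:local}(2) with $F=f$, $s=2$, $T=m$ and $K=1$, and to replace the generic estimate on $b_0$ by the exact value of $b_0^2$ for MaxCut. First I check the hypotheses. Each $I_e$ takes values in $\{0,1\}\subseteq[0,1]$, and $f$ is integer-valued. Since $m\ge1$, $f$ is not constant: the all-zeros assignment has cut value $0$, and an assignment separating the endpoints of some edge has positive cut value. Both are carried by $|+\rangle^{\otimes n}$, so $d\ge2$ and $|\xi\rangle$ is not an eigenvector of $H$. Theorem~\ref{thm:local}(2) then gives, for every $p\ge1$,
\[
\Var[\ell_p]\ \ge\ \frac{49}{384\pi}\,\frac{b_0^4}{(m+1)^2m^3}.
\]
The same bound holds for $\partial\ell_p/\partial\beta_p$, by the final sentence of that theorem (equivalently, Corollary~\ref{cor:gradient}).

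The key step is the exact value $b_0^2=\Var[f(X)]=m/4$ for $X$ uniform on $\{0,1\}^n$. Each $I_e$ is a fair Bernoulli variable, so $\Var[I_e]=1/4$. The main point is that distinct edges give uncorrelated indicators, and I would verify this directly. If $e,e'$ are disjoint, then $I_e$ and $I_{e'}$ depend on disjoint sets of coordinates and are independent. If they share a vertex, write $e=\{i,j\}$ and $e'=\{i,k\}$ with $j\ne k$ (the graph is simple). Conditioning on $x_i$, the variables $x_i\oplus x_j$ and $x_i\oplus x_k$ are independent fair bits, so again $\E[I_eI_{e'}]=1/4$. Hence all covariances vanish and $b_0^2=m\cdot\frac14$. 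Equivalently, in the Fourier expansion of the proof of Theorem~\ref{thm:local}(3), we have $f=\sum_e\frac12(1-y_iy_j)$, so $\widehat f(\{i,j\})=-\frac12$ for each edge and $b_0^2=m/4$.

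Substituting $b_0^4=m^2/16$ gives the first inequality, since $384\cdot16=6144$. Using $(m+1)^2\le4m^2$ for $m\ge1$ gives the second. For the normalized loss, $1\le\lambda_{\max}\le m$, so $\Var[\hat\ell_p]\ge m^{-2}\Var[\ell_p]\ge\frac{49}{24576\pi}m^{-5}$. The derivative versions follow identically, because $\partial\hat\ell_p/\partial\beta_p=\lambda_{\max}^{-1}\partial\ell_p/\partial\beta_p$. The only nonroutine point is the pairwise uncorrelatedness for adjacent edges, and it is immediate.
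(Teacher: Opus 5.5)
Your proposal is correct and follows the paper's proof essentially verbatim: apply Theorem~\ref{thm:local}(2) with $KT=m$, compute $b_0^2=m/4$ from the pairwise uncorrelatedness of the edge indicators (disjoint edges independent, adjacent edges checked directly), and substitute $b_0^4=m^2/16$. The only cosmetic difference is that you obtain the $m^{-5}$ bound by dividing the $m^{-3}$ bound by $\lambda_{\max}^2\le m^2$ rather than quoting the second inequality of Theorem~\ref{thm:local}(2). That inequality is derived the same way, so this amounts to the same computation, and your Fourier cross-check of $b_0^2$ is a valid optional addition.
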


\begin{proof}
Since \(m\ge1\), the function \(f\) is not constant and therefore \(|\xi\rangle\) is not an eigenvector of \(H\). Moreover, \(f\) is integer-valued and \(2\)-local with \(T=m\), \(K=1\), and local terms taking values in \(\{0,1\}\). Thus, Theorem~\ref{thm:local}(2) applies with \(KT=m\), and \(\lambda_{\max}\) is the maximum cut value.

We compute $b_0^2=\Var[f(X)]$. Each $I_e(X)=X_i\oplus X_j$ is a Bernoulli variable with parameter $\frac12$, and hence $\Var[I_e]=\frac14$. For two distinct edges $e=\{i,j\}$ and $e'=\{k,l\}$, since $\Gamma$ is simple, either the edges are vertex-disjoint or they share exactly one vertex. In the first case, the variables $I_e$ and $I_{e'}$ are independent. In the second case, say $j=k$, we have $\Pr[I_e=1,I_{e'}=1]=\Pr[X_i\ne X_j,X_j\ne X_l]=\frac14=\Pr[I_e=1]\Pr[I_{e'}=1]$. In both cases, we have $\operatorname{Cov}[I_e,I_{e'}]=0$. Therefore, we have $b_0^2=m/4$.

Inserting $b_0^4=m^2/16$ and $KT=m$ into the two inequalities of Theorem~\ref{thm:local}(2) leads to
\[
\Var[\ell_p]\ \ge\ \frac{49}{384\pi}\cdot\frac{m^2/16}{(m+1)^2m^3}=\frac{49}{6144\pi}\,\frac{1}{m(m+1)^2}
\]
and
\[
\Var[\hat\ell_p]\ \ge\ \frac{49}{1536\pi}\cdot\frac{m^2/16}{m^7}=\frac{49}{24576\pi}\,\frac{1}{m^5}.
\]
Since $(m+1)^2\le4m^2$ for $m\ge1$, the first right-hand side is at least $\frac{49}{24576\pi}m^{-3}$. The gradient bounds follow from the corresponding assertion of Theorem~\ref{thm:local}.
\end{proof}

Note that the value \(b_0^2=m/4\) calculated in the proof exceeds the general lower bound \(b_0^2\ge4^{-s}=\frac1{16}\) of Theorem~\ref{thm:local}(3) by a factor of \(4m\). Using only the general bound \(b_0^2\ge\frac1{16}\) in Theorem~\ref{thm:local}(2) would give \(\Var[\hat\ell_p]\ge\frac{49}{393216\pi}m^{-7}\). Thus, the exact value of \(b_0^2\) improves the resulting lower bound from order \(m^{-7}\) to order \(m^{-5}\).

\section{\texorpdfstring{Proof of Theorem~\ref{thm:intro-depth}}{Proof of Theorem 1.2}}\label{sec:reach}

{In this section, we prove Theorem~\ref{thm:intro-depth} and develop two further reachability bounds. We first express the prepared state as a linear combination of at most $p+1$ rotated copies of the initial state. This gives the Grover envelope of Theorem~\ref{thm:grover}, the spectrum-dependent refinement of Theorem~\ref{thm:inject}, and the consecutive-run obstruction of Theorem~\ref{thm:run}. Numerical observations are recorded in Subsection~\ref{subsec:numerics}.}

\subsection{Setting for this section}

Throughout this section, the parameters $\beta_k,\gamma_k$ are arbitrary real numbers, not random variables, and the lattice condition of Definition \ref{def:lattice} is assumed only where stated. We fix a target index $j_0\in\{1,\dots,d\}$ and write
\[
|e\rangle:=|\xi_{j_0}\rangle,\qquad \lambda_*:=\lambda_{j_0},\qquad \sin\theta_0:=c_{j_0}\in(0,1),\qquad \theta_0\in(0,\pi/2),
\]
so that $\langle e|\xi\rangle=\sin\theta_0$. The vector $|e\rangle$ should not be confused with $|e_1\rangle$ of Definition~\ref{def:data}, which is not used in this section. Since $\sum_jc_j^2=1$ and $d\ge2$, we have $c_{j_0}<1$, so $\cos\theta_0>0$ and the unit vector
\[
|\xi'\rangle:=\frac{|\xi\rangle-\sin\theta_0\,|e\rangle}{\cos\theta_0}\in e^\perp
\]
is well defined, where $e^\perp$ denotes the orthogonal complement of $|e\rangle$ in $W_0$. We write $P_\perp$ for the orthogonal projection onto $e^\perp$ and $P_S$ for the orthogonal projection onto $\operatorname{span}\{|\xi_j\rangle:\lambda_j\in S\}$ for a set $S$ of real numbers.

Replacing $H$ by $H-\lambda_*$ multiplies $Z(\gamma)$ by the scalar $e^{i\gamma\lambda_*}$, which commutes with every $R(\beta)$, and therefore multiplies every $|\psi_k\rangle$ by a scalar of modulus one. The overlaps of $|\psi_k\rangle$ with fixed vectors change only by a phase. {In Theorem~\ref{thm:inject}, we therefore assume $\lambda_*=0$, so that $Z(\gamma)|e\rangle=|e\rangle$ for all $\gamma$ and $\langle e|Z(\gamma)x\rangle=\langle e|x\rangle$ for all $x\in W_0$.} For $\gamma\in\R$, we put
\[
\sigma(\gamma)^2:=\frac{1}{\cos^2\theta_0}\sum_{j\ne j_0}c_j^2\sin^2\frac{\gamma(\lambda_j-\lambda_*)}{2},\qquad \sigma_{\max}:=\sup_{\gamma\in\R}\sigma(\gamma).
\]
Since $\sum_{j\ne j_0}c_j^2=\cos^2\theta_0$, we have $0\le\sigma(\gamma)\le1$.

\subsection{The prepared state as a linear combination of rotated initial states}
Recall the notation \eqref{eq:phikok}.

\begin{lemma}\label{lem:inject}
For $1\le k\le p$, let $w_k:=e^{-i\beta_k}-1$, and let $S_k:=\gamma_{k+1}+\dots+\gamma_p$ for $0\le k\le p$, so that $S_p=0$. Then
\[
|\psi_p\rangle=\sum_{k=0}^{p}h_k\,Z(S_k)|\xi\rangle,\qquad h_0:=1,\qquad h_k:=w_ko_k\quad(1\le k\le p),
\]
and $|h_k|\le2|o_k|\le2$ for $1\le k\le p$.
\end{lemma}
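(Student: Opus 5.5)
The plan is to prove the identity by induction on $p$, unrolling the recursion $|\psi_k\rangle=R(\beta_k)Z(\gamma_k)|\psi_{k-1}\rangle$. The basic step is the formula already used in the proof of Lemma~\ref{lem:lastmixer}:
\[
R(\beta_k)|\varphi_k\rangle=|\varphi_k\rangle+w_k\,o_k\,|\xi\rangle .
\]
So each mixer adds the vector $h_k|\xi\rangle$ to the current state, and the subsequent cost evolutions then rotate every accumulated term by $Z$ of the partial sum of the later angles.

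Concretely, I will prove by induction on $m\in\{0,\dots,p\}$ that
\[
|\psi_m\rangle=\sum_{k=0}^{m}h_k\,Z(\gamma_{k+1}+\dots+\gamma_m)|\xi\rangle ,
\]
where the empty sum of angles is $0$. The case $m=0$ is $|\psi_0\rangle=|\xi\rangle$ with $h_0=1$.

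For the inductive step, apply $Z(\gamma_{m+1})$. Since $Z(a)Z(b)=Z(a+b)$, this gives
\[
|\varphi_{m+1}\rangle=\sum_{k=0}^{m}h_k\,Z(\gamma_{k+1}+\dots+\gamma_{m+1})|\xi\rangle .
\]
Applying $R(\beta_{m+1})$ then adds $w_{m+1}o_{m+1}|\xi\rangle=h_{m+1}Z(0)|\xi\rangle$. Here $o_{m+1}=\langle\xi|\varphi_{m+1}\rangle$ by \eqref{eq:phikok}. Taking $m=p$ gives the claimed expansion with $S_k=\gamma_{k+1}+\dots+\gamma_p$.

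For the bound, use $|w_k|=|e^{-i\beta_k}-1|\le2$, which gives $|h_k|\le2|o_k|$. Then $|o_k|\le\|\xi\|\,\|\varphi_k\|=1$ by Cauchy--Schwarz, since $Z(\gamma_k)$ and each $R(\beta)$ are unitary and so $|\varphi_k\rangle$ is a unit vector. There is no real obstacle here; the only point needing care is keeping the indexing of the partial sums $S_k$ consistent through the induction.
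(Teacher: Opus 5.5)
Your proof is correct and matches the paper's argument. Both proceed by induction on the number of layers, using $R(\beta_k)|\varphi_k\rangle=|\varphi_k\rangle+w_ko_k|\xi\rangle$ together with $Z(a)Z(b)=Z(a+b)$, and then bound $|h_k|$ via $|w_k|\le2$ and $|o_k|\le\|\xi\|\,\|\varphi_k\|=1$.
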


\begin{proof}
We prove by induction on $k$ that $|\psi_k\rangle=\sum_{l=0}^{k}h_lZ(\gamma_{l+1}+\dots+\gamma_k)|\xi\rangle$ with the stated $h_l$. For $k=0$, this is $|\psi_0\rangle=|\xi\rangle$. Let $k\ge1$. Since $R(\beta_k)|\varphi_k\rangle=|\varphi_k\rangle+w_ko_k|\xi\rangle$ and $|\varphi_k\rangle=Z(\gamma_k)|\psi_{k-1}\rangle$, the induction hypothesis gives
\[
|\psi_k\rangle = R(\beta_k)|\varphi_k\rangle = \sum_{l=0}^{k-1}h_lZ(\gamma_{l+1}+\dots+\gamma_k)|\xi\rangle+h_kZ(0)|\xi\rangle ,
\]
which is the claim for $k$. The bound $|h_k|\le2|o_k|\le2$ follows from the inequalities $|w_k|\le2$ and $|o_k|\le\|\xi\|\|\varphi_k\|=1$.
\end{proof}

In the basis $\{|\xi_j\rangle\}$ of $W_0$, Lemma~\ref{lem:inject} reads $\langle\xi_j|\psi_p\rangle=c_j\,E(\lambda_j)$ with
\begin{equation}\label{eq:expsum}
E(\lambda):=\sum_{k=0}^ph_ke^{-iS_k\lambda},\qquad\lambda\in\R .
\end{equation}
Thus, $\langle\xi_j|\psi_p\rangle=c_jE(\lambda_j)$, where $E$ is an exponential sum with at most $p+1$ terms.
The layer-by-layer recursion \(|\psi_k\rangle=|\varphi_k\rangle+w_ko_k|\xi\rangle\) used in the proof is the cost-resolved amplitude recursion used by Kiktenko, Krendeleva, and Fedorov \cite{KKF26}. For the uniform feasible-state initialization, B\"artschi and
Eidenbenz \cite{BE20} observed that GM-QAOA assigns equal amplitudes to
feasible states with the same cost value, so the dynamics depends only
on the cost values and their initial weights.

\subsection{The Grover envelope}

\begin{theorem}\label{thm:grover}
Let $p\ge1$ and let $\beta_1,\dots,\beta_p,\gamma_1,\dots,\gamma_p\in\R$ be arbitrary. Then the following hold.
\begin{enumerate}
\item For $0\le k\le p$, define $\phi_k\in[0,\pi/2]$ by $\sin\phi_k:=|\langle e|\psi_k\rangle|$. Then $\phi_0=\theta_0$ and $\phi_k\le\min\{\phi_{k-1}+2\theta_0,\ \pi/2\}$ for $1\le k\le p$. Consequently,
\[
|\langle e|\psi_p\rangle|\ \le\ \sin\bigl(\min\{(2p+1)\theta_0,\ \pi/2\}\bigr).
\]
\item For every unit vector $|\chi\rangle\in W_0$, we have $|\langle\chi|\psi_p\rangle|\le(2p+1)\sup_{\gamma\in\R}|\langle\chi|Z(\gamma)|\xi\rangle|$.
\item For every set $S\subset\R$, we have $\|P_S\psi_p\|^2\le(2p+1)^2\,\|P_S\xi\|^2$.
\end{enumerate}
\end{theorem}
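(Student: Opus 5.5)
We prove the three parts separately. Part (1) uses a metric argument on rays in $W_0$, and parts (2) and (3) follow from Lemma~\ref{lem:inject} and the triangle inequality.

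\emph{Part (1).} We work with the Fubini--Study angle $d(u,v):=\arccos|\langle u|v\rangle|\in[0,\pi/2]$ between unit vectors of $W_0$. It is a metric on rays and is invariant under unitaries. Put $\alpha_k:=d(\psi_k,e)=\pi/2-\phi_k$. Clearly $\phi_0=\theta_0$, since $|\langle e|\xi\rangle|=\sin\theta_0$.

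\begin{itemize}
\item[(a)] The cost layer does not change $\alpha$. Since $|e\rangle$ is an eigenvector of $H$, $Z(\gamma_k)|e\rangle$ is a phase multiple of $|e\rangle$, so $d(\varphi_k,e)=d(\psi_{k-1},e)$.
\item[(b)] The mixer decreases $\alpha$ by at most $d(Re,e)$. Writing $R=R(\beta_k)$, invariance and the triangle inequality give
\[
d(R\varphi_k,e)\ \ge\ d(R\varphi_k,Re)-d(Re,e)\ =\ d(\varphi_k,e)-d(Re,e).
\]
\item[(c)] We compute $d(Re,e)$. From $\langle e|R(\beta)|e\rangle=1+(e^{-i\beta}-1)\sin^2\theta_0=\cos^2\theta_0+e^{-i\beta}\sin^2\theta_0$ we get $|\langle e|Re\rangle|\ge\cos^2\theta_0-\sin^2\theta_0=\cos2\theta_0$.
\end{itemize}

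If $2\theta_0\le\pi/2$, step (c) gives $d(Re,e)\le2\theta_0$. Then (a) and (b) yield $\alpha_k\ge\alpha_{k-1}-2\theta_0$, that is, $\phi_k\le\phi_{k-1}+2\theta_0$. If $2\theta_0>\pi/2$, the inequality $\phi_k\le\phi_{k-1}+2\theta_0$ holds trivially because $\phi_k\le\pi/2$. In both cases $\phi_k\le\min\{\phi_{k-1}+2\theta_0,\pi/2\}$. By induction $\phi_p\le\min\{(2p+1)\theta_0,\pi/2\}$, and the bound on $|\langle e|\psi_p\rangle|$ follows because $\sin$ is monotone on $[0,\pi/2]$.

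The main obstacle is this part: the naive estimate $|\langle e|R\varphi\rangle|\le|\langle e|\varphi\rangle|+2|o_k|\sin\theta_0$ is too weak to produce the $2\theta_0$ angle increment. The triangle inequality for the Fubini--Study metric, together with the eigenvector invariance of $|e\rangle$ under $Z$, is what yields the sharp rotation bound.

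\emph{Part (2).} By Lemma~\ref{lem:inject},
\[
\langle\chi|\psi_p\rangle=\sum_{k=0}^p h_k\langle\chi|Z(S_k)|\xi\rangle,
\]
with $h_0=1$ and $|h_k|\le2$ for $k\ge1$. Each overlap is bounded by $\sup_\gamma|\langle\chi|Z(\gamma)|\xi\rangle|$. The triangle inequality then gives the factor $1+2p$.

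\emph{Part (3).} The projection $P_S$ is a spectral projection of $H$, so it commutes with every $Z(\gamma)$. Applying $P_S$ to Lemma~\ref{lem:inject} gives
\[
P_S\psi_p=\sum_{k=0}^p h_kZ(S_k)P_S\xi.
\]
Each term has norm at most $|h_k|\,\|P_S\xi\|$, since $Z(S_k)$ is unitary. The triangle inequality gives $\|P_S\psi_p\|\le(2p+1)\|P_S\xi\|$, and squaring yields the claim.
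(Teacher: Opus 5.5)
Your proof is correct. Parts (2) and (3) are exactly the paper's argument via Lemma~\ref{lem:inject}, but part (1) takes a genuinely different route.

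\textbf{The paper's proof of part (1).} It solves the one-layer worst case explicitly. After normalizing $\langle e|\varphi_k\rangle=\sin\phi_{k-1}\ge0$, it maximizes $|\langle e|R(\beta)\varphi_k\rangle|$ over $\beta$. By convexity it then reduces to the circle $|\langle\xi'|\varphi_k\rangle|=\cos\phi_{k-1}$. Finally it locates the maximum of a concave function of $\cos(\arg\langle\xi'|\varphi_k\rangle)$ through an explicit critical point. The result is $\sin(\phi_{k-1}+2\theta_0)$ whenever $\phi_{k-1}+2\theta_0<\pi/2$.

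\textbf{Your proof of part (1).} You replace all of this by the triangle inequality for the Fubini--Study angle. The cost layer fixes the ray $\C|e\rangle$. The unitary $R(\beta)$ moves that ray by at most $2\theta_0$, since $|\langle e|R(\beta)|e\rangle|=|\cos^2\theta_0+e^{-i\beta}\sin^2\theta_0|\ge\cos2\theta_0$.

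\textbf{What each approach buys.} Your argument is shorter and avoids the convexity/concavity analysis and the degenerate case $B=0$. It also makes clear that the only structural input is that each cost layer preserves $\C|e\rangle$. As a bonus, it gives the $\beta$-dependent increment $\arccos|\cos^2\theta_0+e^{-i\beta}\sin^2\theta_0|$ for free. The paper's computation, in turn, identifies the maximizing configuration. It shows that equality at a layer with $\phi_{k-1}+2\theta_0<\pi/2$ forces $\langle\xi'|\varphi_k\rangle=-\cos\phi_{k-1}$ after the phase normalization. The remark after the theorem uses this to discuss when the Grover envelope is attained. In your approach, that characterization would have to be extracted from the equality case of the Fubini--Study triangle inequality.

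\textbf{The one unproved input.} You use the triangle inequality for $\arccos|\langle u|v\rangle|$ without proof. It is standard and takes one line. Choose phases with $\langle u|v\rangle,\langle v|w\rangle\ge0$. Apply the spherical triangle inequality for the real inner product $\re\langle\cdot|\cdot\rangle$. Then use $|\langle u|w\rangle|\ge\re\langle u|w\rangle$.
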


\begin{proof}
We prove parts (2) and (3) first and part (1) last.

(2) By Lemma~\ref{lem:inject}, we have $|\langle\chi|\psi_p\rangle|\le\sum_{k=0}^p|h_k|\,|\langle\chi|Z(S_k)|\xi\rangle|\le(1+2p)\sup_\gamma|\langle\chi|Z(\gamma)|\xi\rangle|$.

(3) Since $P_S$ commutes with $Z(\gamma)$ and $Z(\gamma)$ is unitary, we have $\|P_SZ(\gamma)\xi\|=\|Z(\gamma)P_S\xi\|=\|P_S\xi\|$ for every $\gamma$. By Lemma~\ref{lem:inject} and the triangle inequality, we have $\|P_S\psi_p\|\le\sum_k|h_k|\,\|P_S\xi\|\le(2p+1)\|P_S\xi\|$.

(1) The equality $\phi_0=\theta_0$ is the definition of $\theta_0$. Fix $1\le k\le p$ and write $\beta:=\beta_k$, $|\Phi\rangle:=|\varphi_k\rangle$, and $\phi:=\phi_{k-1}$ for short. Since $Z(\gamma_k)$ is unitary and commutes with the projection onto $|e\rangle$, we have $|\langle e|\Phi\rangle|=|\langle e|\psi_{k-1}\rangle|=\sin\phi$. Multiplying $|\Phi\rangle$ by a phase changes neither $|\langle e|\Phi\rangle|$ nor $|\langle e|R(\beta)\Phi\rangle|$, so we can assume $x:=\langle e|\Phi\rangle=\sin\phi\ge0$. 

Introduce the shortcut notation 
\[
r:=\cos\phi, \qquad \mathsf{s}:=\sin\theta_0, \qquad \mathsf{c}:=\cos\theta_0, \qquad y:=\langle\xi'|\Phi\rangle. 
\]
Since $|\xi'\rangle\in e^\perp$ is a unit vector and $\|P_\perp\Phi\|^2=1-x^2=r^2$, we have $|y|\le r$. From $|\xi\rangle=\mathsf{s}|e\rangle+\mathsf{c}|\xi'\rangle$, we get $\langle\xi|\Phi\rangle=\mathsf{s}x+\mathsf{c}y$. From $R(\beta)|\Phi\rangle=|\Phi\rangle+w\langle\xi|\Phi\rangle|\xi\rangle$ with $w:=e^{-i\beta}-1$, we obtain
\[
\langle e|R(\beta)\Phi\rangle=x+wz,\qquad z:=\mathsf{s}(\mathsf{s}x+\mathsf{c}y).
\]
Since $x+wz=(x-z)+e^{-i\beta}z$, the maximum of $|x+wz|$ over $\beta\in\R$ is equal to $|x-z|+|z|$, and $x-z=\mathsf{c}(\mathsf{c}x-\mathsf{s}y)$. Hence, we have
\[
\sin\phi_k=|\langle e|\psi_k\rangle|\ \le\ F(y):=\mathsf{c}|\mathsf{c}x-\mathsf{s}y|+\mathsf{s}|\mathsf{s}x+\mathsf{c}y| .
\]
The function $F$ is convex on $\C$, so its maximum over the disc $\{|y|\le r\}$ is attained on the circle $|y|=r$. Write $y=re^{i\omega}$ and $\mathsf{t}:=\cos\omega\in[-1,1]$. Then
\[
F(y)=\mathsf{c}\sqrt{A-B\mathsf{t}}+\mathsf{s}\sqrt{C+B\mathsf{t}},\qquad A:=\mathsf{c}^2x^2+\mathsf{s}^2r^2,\quad C:=\mathsf{s}^2x^2+\mathsf{c}^2r^2,\quad B:=2\mathsf{c}\mathsf{s}xr .
\]
Let $g(\mathsf{t})$ denote the right-hand side, defined on the interval $J:=\{\mathsf{t}\in\R:A-B\mathsf{t}\ge0,\ C+B\mathsf{t}\ge0\}\supset[-1,1]$. The function $g$ is concave on $J$, being a sum of concave functions. 

If $B=0$, then $x=0$ or $r=0$, which means $\phi\in\{0,\pi/2\}$. If $\phi=\pi/2$, there is nothing to prove, and if $\phi=0$, then $r=1$ and $g\equiv\mathsf{c}\mathsf{s}+\mathsf{s}\mathsf{c}=\sin2\theta_0=\sin(\phi+2\theta_0)$. 

Suppose that $B>0$. On the interior of $J$, we have $g'(\mathsf{t})=0$ if and only if $\mathsf{c}\sqrt{C+B\mathsf{t}}=\mathsf{s}\sqrt{A-B\mathsf{t}}$, that is, $\mathsf{c}^2(C+B\mathsf{t})=\mathsf{s}^2(A-B\mathsf{t})$. Using $\mathsf{c}^2C-\mathsf{s}^2A=(\mathsf{c}^4-\mathsf{s}^4)r^2=(\mathsf{c}^2-\mathsf{s}^2)r^2=r^2\cos2\theta_0$ and $B=xr\sin2\theta_0$, the unique solution is
\[
\mathsf{t}^*=-\frac{r\cos2\theta_0}{x\sin2\theta_0}\,,
\]
and there $A-B\mathsf{t}^*=\mathsf{c}^2(x^2+r^2)=\mathsf{c}^2>0$ and $C+B\mathsf{t}^*=\mathsf{s}^2>0$. The point $\mathsf{t}^*$ therefore lies in the interior of $J$, and we have $g(\mathsf{t}^*)=\mathsf{c}^2+\mathsf{s}^2=1$. Since $g$ is concave, $\mathsf{t}^*$ is its global maximum on $J$.

Suppose now that $\phi+2\theta_0<\pi/2$. Then $\cos(\phi+2\theta_0)=r\cos2\theta_0-x\sin2\theta_0>0$, so $\mathsf{t}^*<-1$, and the concave function $g$ is decreasing on $[-1,1]$. Therefore, we have $F(y)\le g(-1)$, which is the value of $F$ at $y=-r$:
\[
g(-1)=\mathsf{c}(\mathsf{c}x+\mathsf{s}r)+\mathsf{s}|\mathsf{s}x-\mathsf{c}r| .
\]
Moreover, we have $\mathsf{s}x<\mathsf{c}r$, because $\mathsf{s}x-\mathsf{c}r=-\cos(\phi+\theta_0)$ and $\phi+\theta_0<\pi/2$. Hence, we have
\[
g(-1)=(\mathsf{c}^2-\mathsf{s}^2)x+2\mathsf{c}\mathsf{s}r=\cos2\theta_0\sin\phi+\sin2\theta_0\cos\phi=\sin(\phi+2\theta_0).
\]
Thus $\phi_k\le\phi_{k-1}+2\theta_0$. If instead $\phi+2\theta_0\ge\pi/2$, the bound $\phi_k\le\pi/2=\min\{\phi_{k-1}+2\theta_0,\pi/2\}$ holds by the definition of $\phi_k$. This proves the recursive bound, and the displayed inequality follows by induction from $\phi_0=\theta_0$.
\end{proof}

The bound in Theorem~\ref{thm:grover}(1) is the amplitude of Grover's algorithm after $p$ iterations \cite{Gro97,BBHT98,Zal99}, and it is attained when $d=2$. Indeed, the proof shows that the equality $\phi_k=\phi_{k-1}+2\theta_0$ holds at a layer with $\phi_{k-1}+2\theta_0<\pi/2$ only if $y=-r$ in the notation of the proof. Equivalently, equality at such a layer requires the component of $|\varphi_k\rangle$ orthogonal to $|e\rangle$ to be a negative multiple of $|\xi'\rangle$ after the phase normalization $\langle e|\varphi_k\rangle\ge0$. When $d=2$, the space $e^\perp$ is spanned by $|\xi'\rangle$ and $Z(\gamma_k)$ acts on it by a phase, so the required alignment can be arranged by the choice of $\gamma_k$ at every layer. Grover search with adjustable phases is treated by Long \cite{Lon01} and H\o yer \cite{Hoy00}. For general $d$, invariance of the line $\C|\xi'\rangle$ alone is not sufficient. The relative phase must also give the negative alignment required above. A sufficient condition is
\[
\gamma_k(\lambda_j-\lambda_*)\in\pi+2\pi\Z
\qquad(j\ne j_0),
\]
for every $k$, together with the corresponding choice of mixer phase. Under this condition, the cost layer produces the required relative sign on the carried eigenspaces orthogonal to the target.

\subsection{\texorpdfstring{Proof of Theorem~\ref{thm:intro-depth}  and a spectral refinement}{Proof of Theorem 1.2}}

We now prove the second main theorem.

\begin{proof}[Proof of Theorem~\ref{thm:intro-depth}]
By Theorem~\ref{thm:grover}(1), we have
\[
\sqrt{1-\varepsilon}\le|\langle e|\psi_p\rangle|\le\sin\bigl(\min\{(2p+1)\theta_0,\pi/2\}\bigr).
\]
Hence \(\arcsin\sqrt{1-\varepsilon}\le (2p+1)\theta_0\), which proves the first claim.  For the second claim, the equality $|\langle e|\psi_p\rangle|=1$ forces $(2p+1)\theta_0\ge\pi/2$ by Theorem~\ref{thm:grover}(1), that is, $p\ge\pi/(4\theta_0)-\frac12$. For $\theta\in[0,\pi/2]$, we have $\sin\theta\ge2\theta/\pi$, so $\pi/(2\theta_0)\ge1/\sin\theta_0$, and we have $\sin^2\theta_0=c_{j_0}^2=N_*2^{-n}$ because $|\langle x|\xi\rangle|^2=2^{-n}$ for every basis state $|x\rangle$. Hence, we have $2p+1\ge2^{n/2}N_*^{-1/2}$.
\end{proof}

\begin{theorem}\label{thm:inject}
Assume $\lambda_*=0$ and let $p\ge1$. Then
\[
\langle e|\psi_p\rangle=(\sin\theta_0)\Bigl(\langle\xi|\psi_p\rangle+\sum_{k=1}^p\big\langle(I-Z(\gamma_k)^\dagger)\xi\,\big|\,P_\perp\psi_{k-1}\big\rangle\Bigr),
\]
and consequently, with $\phi_k$ as in Theorem~\ref{thm:grover}(1), we have
\[
|\langle e|\psi_p\rangle|\ \le\ (\sin\theta_0)\Bigl(1+2\cos\theta_0\sum_{k=1}^p\sigma(\gamma_k)\cos\phi_{k-1}\Bigr)\ \le\ (\sin\theta_0)\bigl(1+2p\,\sigma_{\max}\cos\theta_0\bigr).
\]
\end{theorem}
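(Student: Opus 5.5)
The plan is to derive the identity by a telescoping argument that tracks two scalar sequences through the circuit: the target overlap $\langle e|\psi_k\rangle$ and the return amplitude $\langle\xi|\psi_k\rangle$. The norm bound will then follow from Cauchy--Schwarz applied to each summand.

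First, the one-step recursions. With $\lambda_*=0$ we have $Z(\gamma)|e\rangle=|e\rangle$, so $\langle e|\varphi_k\rangle=\langle e|\psi_{k-1}\rangle$. From $|\psi_k\rangle=|\varphi_k\rangle+w_ko_k|\xi\rangle$ (as in Lemma~\ref{lem:inject}) and $\langle e|\xi\rangle=\sin\theta_0$ I get
\[
\langle e|\psi_k\rangle=\langle e|\psi_{k-1}\rangle+(\sin\theta_0)\,w_ko_k .
\]
Taking the inner product with $\langle\xi|$ instead gives $\langle\xi|\psi_k\rangle=o_k+w_ko_k=e^{-i\beta_k}o_k$. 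Hence $w_ko_k=\langle\xi|\psi_k\rangle-o_k$, which eliminates the mixer angles.

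Second, the telescoping. Summing over $k$ and using $\langle e|\psi_0\rangle=\sin\theta_0=(\sin\theta_0)\langle\xi|\psi_0\rangle$ gives
\[
\langle e|\psi_p\rangle=(\sin\theta_0)\Bigl(\langle\xi|\psi_0\rangle+\sum_{k=1}^p\bigl(\langle\xi|\psi_k\rangle-o_k\bigr)\Bigr).
\]
I then regroup this as $\langle\xi|\psi_p\rangle+\sum_{k=1}^p\bigl(\langle\xi|\psi_{k-1}\rangle-o_k\bigr)$, which is just an index shift. Since $o_k=\langle\xi|Z(\gamma_k)\psi_{k-1}\rangle=\langle Z(\gamma_k)^\dagger\xi|\psi_{k-1}\rangle$, each summand equals $\langle(I-Z(\gamma_k)^\dagger)\xi|\psi_{k-1}\rangle$. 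The vector $(I-Z(\gamma_k)^\dagger)\xi$ has zero $|e\rangle$-component because $Z$ fixes $|e\rangle$. It therefore lies in $e^\perp$, so $\psi_{k-1}$ may be replaced by $P_\perp\psi_{k-1}$. This proves the identity.

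Third, the estimate. I will use $|\langle\xi|\psi_p\rangle|\le1$ and apply Cauchy--Schwarz to each summand. On the first factor,
\[
\|(I-Z(\gamma)^\dagger)\xi\|^2=\sum_{j\ne j_0}c_j^2|1-e^{i\gamma\lambda_j}|^2=4\sum_{j\ne j_0}c_j^2\sin^2\tfrac{\gamma\lambda_j}{2}=4\cos^2\theta_0\,\sigma(\gamma)^2 .
\]
On the second factor, $\|P_\perp\psi_{k-1}\|^2=1-|\langle e|\psi_{k-1}\rangle|^2=\cos^2\phi_{k-1}$. Together these give the first inequality. The second follows from $\sigma(\gamma_k)\le\sigma_{\max}$ and $\cos\phi_{k-1}\le1$.

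The only delicate point I expect is the bookkeeping in the telescoping regrouping, in particular checking that the $k=0$ term cancels the leading $\sin\theta_0$. The rest is routine.
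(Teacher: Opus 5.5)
Your proposal is correct and follows essentially the same route as the paper: the recursion $\langle e|\psi_k\rangle=\langle e|\psi_{k-1}\rangle+(\sin\theta_0)w_ko_k$, eliminating $w_ko_k=\langle\xi|\psi_k\rangle-o_k$ by pairing with $|\xi\rangle$, telescoping, noting that $(I-Z(\gamma_k)^\dagger)\xi\in e^\perp$, and applying Cauchy--Schwarz with the same computation $\|(I-Z(\gamma)^\dagger)\xi\|=2\cos\theta_0\,\sigma(\gamma)$. Your index-shift regrouping is only a rearrangement of the paper's telescoping of $\langle\xi|\psi_k\rangle-\langle\xi|\psi_{k-1}\rangle$.
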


\begin{proof}
Put $\alpha_k:=\langle e|\psi_k\rangle$ for $0\le k\le p$. Since $\lambda_*=0$, we have $\langle e|\varphi_k\rangle=\langle e|\psi_{k-1}\rangle=\alpha_{k-1}$, and the identity $|\psi_k\rangle=|\varphi_k\rangle+w_ko_k|\xi\rangle$ gives $\alpha_k=\alpha_{k-1}+w_ko_k\sin\theta_0$. Taking the inner product of the same identity with $|\xi\rangle$ gives $w_ko_k=\langle\xi|\psi_k\rangle-\langle\xi|\varphi_k\rangle$. Writing 
\[
\langle\xi|\varphi_k\rangle=\langle\xi|\psi_{k-1}\rangle-\langle\xi|(I-Z(\gamma_k))\psi_{k-1}\rangle
\]
and summing over $k$, the differences $\langle\xi|\psi_k\rangle-\langle\xi|\psi_{k-1}\rangle$ telescope to $\langle\xi|\psi_p\rangle-1$. Therefore,
\[
\alpha_p=(\sin\theta_0)\Bigl(1+\langle\xi|\psi_p\rangle-1+\sum_{k=1}^p\langle\xi|(I-Z(\gamma_k))\psi_{k-1}\rangle\Bigr),
\]
and $\langle\xi|(I-Z(\gamma_k))\psi_{k-1}\rangle=\langle(I-Z(\gamma_k)^\dagger)\xi|\psi_{k-1}\rangle$. The vector $(I-Z(\gamma_k)^\dagger)|\xi\rangle$ has coordinates $c_j(1-e^{i\gamma_k\lambda_j})$, whose $j_0$-th coordinate vanishes because $\lambda_*=0$. Thus, the inner product can be taken with $P_\perp|\psi_{k-1}\rangle$, which proves the identity.

For the inequality, we have $|\langle\xi|\psi_p\rangle|\le1$, $\|P_\perp|\psi_{k-1}\rangle\|=\cos\phi_{k-1}$, and
\[
\|(I-Z(\gamma_k)^\dagger)\xi\|^2=\sum_{j\ne j_0}c_j^2|1-e^{i\gamma_k\lambda_j}|^2=4\sum_{j\ne j_0}c_j^2\sin^2\frac{\gamma_k\lambda_j}{2}=4\cos^2\theta_0\,\sigma(\gamma_k)^2 .
\]
The Cauchy--Schwarz inequality applied to each summand gives the first bound, and the bounds $\sigma(\gamma_k)\le\sigma_{\max}$ and $\cos\phi_{k-1}\le1$ give the second.
\end{proof}

\begin{corollary}\label{cor:spectral-depth}
Assume $\lambda_*=0$ and let $\varepsilon\in(0,1)$. If\/
$|\langle e|\psi_p\rangle|^2\ge1-\varepsilon$ for some choice of parameters, then
\[
p\ge
\frac{(\sin\theta_0)^{-1}\sqrt{1-\varepsilon}-1}
{2\sigma_{\max}\cos\theta_0}\,.
\]
\end{corollary}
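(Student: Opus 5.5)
This follows directly from Theorem~\ref{thm:inject}. The plan is to combine its second inequality with the hypothesis and solve for $p$.

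By assumption, $\sqrt{1-\varepsilon}\le|\langle e|\psi_p\rangle|$. Theorem~\ref{thm:inject}, which applies because $\lambda_*=0$, gives
\[
\sqrt{1-\varepsilon}\ \le\ (\sin\theta_0)\bigl(1+2p\,\sigma_{\max}\cos\theta_0\bigr).
\]
Dividing by $\sin\theta_0>0$ and subtracting $1$ yields
\[
2p\,\sigma_{\max}\cos\theta_0\ \ge\ (\sin\theta_0)^{-1}\sqrt{1-\varepsilon}-1.
\]

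We then divide by $2\sigma_{\max}\cos\theta_0$. Here $\cos\theta_0>0$ was noted in the setting, since $c_{j_0}<1$ when $d\ge2$.

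The one point needing care is that $\sigma_{\max}>0$; otherwise the bound should be read with the convention that the right-hand side is $+\infty$. I would show positivity directly. Because $d\ge2$, there is some $j\ne j_0$ with $\lambda_j\ne\lambda_*=0$ and $c_j>0$. Taking $\gamma=\pi/\lambda_j$ gives $\sin^2(\gamma\lambda_j/2)=1$, so $\sigma(\gamma)^2\ge c_j^2/\cos^2\theta_0>0$. Hence $\sigma_{\max}>0$, and the division is legitimate.

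If the numerator $(\sin\theta_0)^{-1}\sqrt{1-\varepsilon}-1$ is negative, the inequality holds trivially because $p\ge1$. The argument above covers that case anyway, since it never used the sign of the numerator. There is no real obstacle: all the work is contained in Theorem~\ref{thm:inject}.
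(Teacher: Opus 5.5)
Your proof is correct and is the paper's argument: the paper likewise obtains the corollary by inserting $\sqrt{1-\varepsilon}\le|\langle e|\psi_p\rangle|$ into the last inequality of Theorem~\ref{thm:inject} and solving for $p$. Your check that $\sigma_{\max}>0$ (take $\gamma=\pi/\lambda_j$ for some $j\ne j_0$) is a valid detail that the paper leaves implicit.
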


\begin{proof}
This follows by solving the last inequality in Theorem~\ref{thm:inject} for $p$.
\end{proof}

For $|\chi\rangle=|e\rangle$, Theorem~\ref{thm:inject} replaces the coefficient $1$ of the $2p$ term in Theorem~\ref{thm:grover}(2) by $\sigma_{\max}\cos\theta_0\le1$. Both bounds are linear in $p$. For $d=2$, one has $\sigma(\gamma)=|\sin(\gamma(\lambda_2-\lambda_1)/2)|$ and $\sigma_{\max}=1$.

\subsection{Further refinements}

Next, we give an obstruction to exact reachability arising from consecutive lattice points in the carried spectrum.

\begin{theorem}\label{thm:run}
Assume the lattice condition with spacing $\eta$. Suppose that for some $\lambda_0\in\R$ and some integer $q\ge1$ the values $\lambda_0+\eta,\lambda_0+2\eta,\dots,\lambda_0+ q\eta$ all belong to $\{\lambda_j:j\ne j_0\}$. If $|\langle e|\psi_p\rangle|=1$ for some choice of parameters, then $p\ge q$.
\end{theorem}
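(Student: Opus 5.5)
The plan is to use the exponential-sum representation from Lemma~\ref{lem:inject}, in the coordinate form \eqref{eq:expsum}: $\langle\xi_j|\psi_p\rangle=c_jE(\lambda_j)$ with $E(\lambda)=\sum_{k=0}^p h_k e^{-iS_k\lambda}$. Suppose $|\langle e|\psi_p\rangle|=1$, and suppose for contradiction that $p\le q-1$, i.e.\ $p+1\le q$. Since $|\psi_p\rangle$ is a unit vector in $W_0$ with full weight on $|e\rangle=|\xi_{j_0}\rangle$, every other coordinate vanishes. Because all $c_j>0$, this gives $E(\lambda_j)=0$ for every $j\ne j_0$. On the other hand, $|E(\lambda_*)|=1/c_{j_0}\neq0$.

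Next I restrict $E$ to the lattice line through $\lambda_0$. Put $z_k:=e^{-iS_k\eta}$ and $a_k:=h_ke^{-iS_k\lambda_0}$. Then
\[
E(\lambda_0+m\eta)=\sum_{k=0}^p a_k z_k^m,\qquad m\in\Z .
\]
The $z_k$ need not be distinct, so I group equal nodes. Let $\zeta_1,\dots,\zeta_r$ be the distinct values among the $z_k$, with $r\le p+1\le q$, and set $A_i:=\sum_{k:\,z_k=\zeta_i}a_k$. Then $E(\lambda_0+m\eta)=\sum_{i=1}^r A_i\zeta_i^m$ for all integers $m$.

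The hypothesis says this expression vanishes for $m=1,\dots,q$. The $q\times r$ coefficient matrix $(\zeta_i^m)_{1\le m\le q,\,1\le i\le r}$ has an $r\times r$ top block equal to a Vandermonde matrix in the distinct nodes $\zeta_i$ times $\diag(\zeta_i)$. This block is invertible, since the nodes are distinct and nonzero (each $|\zeta_i|=1$). Hence all $A_i=0$, and therefore $E(\lambda_0+m\eta)=0$ for every $m\in\Z$.

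Finally, by the lattice condition $\lambda_*-(\lambda_0+\eta)\in\eta\Z$, because $\lambda_0+\eta$ is one of the carried values $\lambda_j$. So $\lambda_*=\lambda_0+m\eta$ for some integer $m$, which forces $E(\lambda_*)=0$. This contradicts $|E(\lambda_*)|=1/c_{j_0}$, and so $p\ge q$.

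The step I expect to need the most care is the merging of coinciding nodes $z_k$. The cost angles are arbitrary reals, so the phases $e^{-iS_k\eta}$ can coincide, and the Vandermonde argument must be run on the merged coefficients $A_i$ rather than on the $h_k$ themselves. This is harmless, because the contradiction only needs the vanishing of the full sum $E$ on the lattice line, and the fact $h_0=1$ is never used.
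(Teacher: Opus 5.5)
Your proposal is correct and follows essentially the same route as the paper: restrict the exponential sum $E$ to the lattice line $\lambda_0+\eta\Z$, merge coinciding nodes $e^{-iS_k\eta}$, and apply the Vandermonde argument to the $q$ zeros. The contradiction comes from $E(\lambda_*)\ne0$ together with $\lambda_*\in\lambda_0+\eta\Z$. The only difference is cosmetic: you argue by contradiction from $p\le q-1$, whereas the paper derives $t\ge q+1$ directly from the fact that not all merged coefficients vanish.
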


\begin{proof}
Let $E$ be the exponential sum \eqref{eq:expsum}, so that $\langle\xi_j|\psi_p\rangle=c_jE(\lambda_j)$ for all $j$. Put $z_k:=e^{-iS_k\eta}$ for $0\le k\le p$. Grouping the indices $k$ with equal $z_k$, we find distinct complex numbers $\zeta_1,\dots,\zeta_t$ of modulus one, with $1\le t\le p+1$, and complex numbers $b_1,\dots,b_t$ such that
\[
E(\lambda_0+\nu\eta)=\sum_{l=1}^t b_l\,\zeta_l^{\,\nu}\qquad\text{for all }\nu\in\Z,
\]
namely, $\zeta_l$ are the different values among $z_0,\dots,z_p$ and $b_l=\sum_{k:\,z_k=\zeta_l}h_ke^{-iS_k\lambda_0}$.

Suppose $|\langle e|\psi_p\rangle|=1$. Then $|\psi_p\rangle$ is a multiple of $|e\rangle$, so $c_jE(\lambda_j)=0$ for all $j\ne j_0$, while $c_{j_0}E(\lambda_*)\ne0$. Since every $c_j$ is positive, the function $E$ vanishes at $\lambda_0+\nu\eta$ for $\nu=1,\dots, q$ and $E(\lambda_*)\ne0$. By the lattice condition, we have $\lambda_*-(\lambda_0+\eta)\in\eta\Z$, so $\lambda_*\in\lambda_0+\eta\Z$ and not all $b_l$ vanish. 

If $t\le q$, then the $t$ equations $\sum_lb_l\zeta_l^{\,\nu}=0$ for $\nu=1,\dots,t$ form a linear system for $(b_l\zeta_l)_l$ whose matrix $(\zeta_l^{\,\nu-1})_{\nu,l}$ is a Vandermonde matrix with distinct nodes, hence invertible \cite{HJ13}. This forces $b_l\zeta_l=0$ for all $l$, hence $b_l=0$ for all $l$, a contradiction. Therefore, we have $t\ge q+1$, and so $p\ge t-1\ge q$.
\end{proof}

When the spectrum carried by $|\xi\rangle$ is ${\lambda_1,\lambda_1+\eta,\dots,\lambda_1+(d-1)\eta}$ and the target is $|\xi_1\rangle$ or $|\xi_d\rangle$, Theorem~\ref{thm:run} applies with $q=d-1$ and shows that exact reachability requires $p\ge d-1$ for arbitrary choices of the angles.  Note that this bound is sharp. For example, for $\lambda=(0,1,3)$ and $c_1^2=\frac14$, the operator $Z(\pi)$ acts as $+1$ on $|\xi_1\rangle$ and as $-1$ on $|\xi_2\rangle,|\xi_3\rangle$. The dynamics then reduces to the case $d=2$ with $\sin\theta_0=\frac12$, and the equality $(2\cdot1+1)\theta_0=\pi/2$ shows that $|\xi_1\rangle$ is reached exactly at $p=1<d-1$.

\subsection{Numerical observations}\label{subsec:numerics}

For the following numerical observations, given a pair $(\lambda,c)$ and depth $p$,
we numerically optimized the infidelity $1-|\langle e|\psi_p\rangle|^2$ over all $2p$ angles
by quasi-Newton (L-BFGS) optimization with random and warm restarts. We recorded the smallest depth $p^*$ at which the smallest infidelity found was below $10^{-9}$.
{We first describe the computation and then record the observations. Figures~\ref{fig:envelope}--\ref{fig:pstar} display the results.

By Lemma~\ref{lem:invariant}, all computations can be carried out in the invariant subspace \(W_0\), which we identify with \(\C^d\) using the basis \(\{|\xi_j\rangle\}\). In the basis $\{|\xi_j\rangle\}$, the operator $Z(\gamma)$ is the diagonal map with entries $e^{-i\gamma\lambda_j}$, and $R(\beta)$ acts as the rank-one update $|\varphi\rangle\mapsto|\varphi\rangle+(e^{-i\beta}-1)\langle\xi|\varphi\rangle|\xi\rangle$. Hence, one evaluation of the state $|\psi_p\rangle$, and of the infidelity, costs $O(pd)$ arithmetic operations, and the infidelity is a smooth function of the $2p$ angles. We minimized it with the limited-memory quasi-Newton method L-BFGS-B \cite{BLNZ95}, as implemented in SciPy \cite{SciPy20}, with the gradient with respect to all $2p$ angles computed by automatic differentiation \cite{JAX18}. All computations were performed in IEEE double-precision arithmetic \cite{IEEE754}, with objective tolerance $10^{-15}$ and gradient tolerance $10^{-12}$.

Each minimization used between $20$ and $60$ starting points of three kinds. The starting points of the first kind draw all $2p$ angles independently and uniformly from $[0,2\pi)$. Starting points of the second kind extend the best parameters found at depth $p-1$ by one identity layer, with $\beta=0$ and a uniformly random $\gamma$, inserted at a uniformly random position, and then perturb all angles by small Gaussian noise. For the largest instances (the consecutive spectra with $d\in\{8,16\}$ and small $\sin^2\theta_0$, and the binomial weights with $m\in\{8,10\}$), starting points of a third kind were also used. These starting points assign one fixed pair $(\beta,\gamma)$ to all interior layers, which gives a constant-angle schedule, and randomize up to three initial and final layers. The smallest infidelity found over all starting points is reported at each depth.

The computation certifies reachability, and the computed angles are the certificate. If the search reaches an infidelity below $10^{-9}$ at some depth, then $|e\rangle$ is reachable to that accuracy at that depth. The computation does not certify unreachability, since at a depth where the search stalls the true minimum can be smaller than the value found. The recorded $p^*$ is therefore an upper bound for the smallest depth at which the infidelity falls below $10^{-9}$. Theorem~\ref{thm:intro-depth} bounds this depth from below, and Theorem~\ref{thm:run} bounds from below the depth required for exact reachability. At every reported $p^*$, the smallest infidelity found was below $10^{-14}$, which is at the level of the double-precision rounding error.}

{We begin with two families of instances: the case $d=2$ with $\sin^2\theta_0\in\{0.3,0.1,0.03,0.01\}$ and target $j_0=1$, and the spectrum $\lambda=(0,1,3,5)$ with $\sin^2\theta_0\in\{0.1,0.03\}$, target $j_0=1$, and the remaining three weights equal. For these instances, the smallest infidelity found was in agreement with
\[
1-\sin^2\bigl(\min\{(2p+1)\theta_0,\pi/2\}\bigr)
\]
to $10^{-15}$ at every depth up to the first depth at which this quantity vanishes. {Figure~\ref{fig:envelope} shows two of these instances. The optimized fidelities lie on the Grover envelope at every depth.} For the spectra $\lambda=(0,1,\dots,d-1)$ with target $j_0=1$, equal weights $c_j^2=1/d$ and $3\le d\le8$, we found $p^*=d-1$ in every case, so the bound of Theorem~\ref{thm:run} was numerically attained to the stated tolerance. {The left panel of Figure~\ref{fig:infid} records the minimized infidelity at each depth for these six spectra.}

\begin{figure}[t]

\centering
\begin{tikzpicture}
\begin{axis}[gmaxis, xlabel={$p$}, ylabel={$|\langle e|\psi_p\rangle|^2$},
  xmin=-0.2, xmax=8.4, ymin=0, ymax=1.06, xtick={0,1,2,3,4,5,6,7,8},
  title={$d=2$, $\sin^2\theta_0=0.01$}, legend pos=south east]
\addplot[gmblue, line width=0.9pt] coordinates {(0.0000,0.010000) (0.0278,0.011137) (0.0555,0.012334) (0.0833,0.013592) (0.1110,0.014910) (0.1388,0.016288) (0.1666,0.017726) (0.1943,0.019223) (0.2221,0.020780) (0.2498,0.022396) (0.2776,0.024071) (0.3054,0.025805) (0.3331,0.027598) (0.3609,0.029449) (0.3886,0.031359) (0.4164,0.033326) (0.4441,0.035351) (0.4719,0.037434) (0.4997,0.039574) (0.5274,0.041771) (0.5552,0.044024) (0.5829,0.046334) (0.6107,0.048700) (0.6385,0.051122) (0.6662,0.053599) (0.6940,0.056132) (0.7217,0.058719) (0.7495,0.061361) (0.7773,0.064058) (0.8050,0.066808) (0.8328,0.069612) (0.8605,0.072469) (0.8883,0.075379) (0.9161,0.078341) (0.9438,0.081356) (0.9716,0.084422) (0.9993,0.087540) (1.0271,0.090709) (1.0548,0.093929) (1.0826,0.097198) (1.1104,0.100518) (1.1381,0.103887) (1.1659,0.107305) (1.1936,0.110771) (1.2214,0.114286) (1.2492,0.117849) (1.2769,0.121458) (1.3047,0.125115) (1.3324,0.128818) (1.3602,0.132567) (1.3880,0.136361) (1.4157,0.140200) (1.4435,0.144084) (1.4712,0.148012) (1.4990,0.151983) (1.5268,0.155997) (1.5545,0.160054) (1.5823,0.164153) (1.6100,0.168294) (1.6378,0.172475) (1.6656,0.176698) (1.6933,0.180960) (1.7211,0.185261) (1.7488,0.189602) (1.7766,0.193981) (1.8043,0.198397) (1.8321,0.202851) (1.8599,0.207342) (1.8876,0.211869) (1.9154,0.216432) (1.9431,0.221029) (1.9709,0.225662) (1.9987,0.230328) (2.0264,0.235027) (2.0542,0.239760) (2.0819,0.244524) (2.1097,0.249320) (2.1375,0.254147) (2.1652,0.259005) (2.1930,0.263892) (2.2207,0.268809) (2.2485,0.273754) (2.2763,0.278727) (2.3040,0.283728) (2.3318,0.288755) (2.3595,0.293808) (2.3873,0.298887) (2.4151,0.303991) (2.4428,0.309119) (2.4706,0.314271) (2.4983,0.319445) (2.5261,0.324642) (2.5538,0.329861) (2.5816,0.335101) (2.6094,0.340361) (2.6371,0.345641) (2.6649,0.350939) (2.6926,0.356257) (2.7204,0.361592) (2.7482,0.366944) (2.7759,0.372313) (2.8037,0.377697) (2.8314,0.383097) (2.8592,0.388511) (2.8870,0.393939) (2.9147,0.399380) (2.9425,0.404834) (2.9702,0.410299) (2.9980,0.415775) (3.0258,0.421262) (3.0535,0.426759) (3.0813,0.432264) (3.1090,0.437778) (3.1368,0.443300) (3.1645,0.448828) (3.1923,0.454363) (3.2201,0.459904) (3.2478,0.465450) (3.2756,0.470999) (3.3033,0.476553) (3.3311,0.482109) (3.3589,0.487668) (3.3866,0.493228) (3.4144,0.498789) (3.4421,0.504350) (3.4699,0.509910) (3.4977,0.515470) (3.5254,0.521027) (3.5532,0.526582) (3.5809,0.532133) (3.6087,0.537681) (3.6365,0.543224) (3.6642,0.548761) (3.6920,0.554293) (3.7197,0.559818) (3.7475,0.565335) (3.7753,0.570844) (3.8030,0.576345) (3.8308,0.581836) (3.8585,0.587317) (3.8863,0.592787) (3.9140,0.598246) (3.9418,0.603692) (3.9696,0.609126) (3.9973,0.614546) (4.0251,0.619952) (4.0528,0.625343) (4.0806,0.630719) (4.1084,0.636079) (4.1361,0.641421) (4.1639,0.646747) (4.1916,0.652054) (4.2194,0.657342) (4.2472,0.662611) (4.2749,0.667859) (4.3027,0.673087) (4.3304,0.678294) (4.3582,0.683478) (4.3860,0.688640) (4.4137,0.693778) (4.4415,0.698893) (4.4692,0.703982) (4.4970,0.709047) (4.5247,0.714086) (4.5525,0.719098) (4.5803,0.724083) (4.6080,0.729040) (4.6358,0.733970) (4.6635,0.738870) (4.6913,0.743740) (4.7191,0.748581) (4.7468,0.753390) (4.7746,0.758169) (4.8023,0.762915) (4.8301,0.767629) (4.8579,0.772310) (4.8856,0.776957) (4.9134,0.781570) (4.9411,0.786148) (4.9689,0.790690) (4.9967,0.795197) (5.0244,0.799667) (5.0522,0.804100) (5.0799,0.808496) (5.1077,0.812853) (5.1355,0.817171) (5.1632,0.821451) (5.1910,0.825690) (5.2187,0.829890) (5.2465,0.834048) (5.2742,0.838165) (5.3020,0.842241) (5.3298,0.846274) (5.3575,0.850264) (5.3853,0.854211) (5.4130,0.858114) (5.4408,0.861972) (5.4686,0.865786) (5.4963,0.869555) (5.5241,0.873278) (5.5518,0.876955) (5.5796,0.880585) (5.6074,0.884168) (5.6351,0.887703) (5.6629,0.891191) (5.6906,0.894630) (5.7184,0.898021) (5.7462,0.901362) (5.7739,0.904653) (5.8017,0.907895) (5.8294,0.911086) (5.8572,0.914226) (5.8849,0.917315) (5.9127,0.920352) (5.9405,0.923337) (5.9682,0.926270) (5.9960,0.929150) (6.0237,0.931977) (6.0515,0.934751) (6.0793,0.937471) (6.1070,0.940137) (6.1348,0.942748) (6.1625,0.945304) (6.1903,0.947806) (6.2181,0.950252) (6.2458,0.952642) (6.2736,0.954977) (6.3013,0.957255) (6.3291,0.959476) (6.3569,0.961641) (6.3846,0.963748) (6.4124,0.965799) (6.4401,0.967791) (6.4679,0.969726) (6.4957,0.971603) (6.5234,0.973421) (6.5512,0.975180) (6.5789,0.976881) (6.6067,0.978523) (6.6344,0.980106) (6.6622,0.981629) (6.6900,0.983093) (6.7177,0.984497) (6.7455,0.985841) (6.7732,0.987125) (6.8010,0.988349) (6.8288,0.989512) (6.8565,0.990615) (6.8843,0.991657) (6.9120,0.992638) (6.9398,0.993559) (6.9676,0.994418) (6.9953,0.995216) (7.0231,0.995953) (7.0508,0.996628) (7.0786,0.997242) (7.1064,0.997795) (7.1341,0.998286) (7.1619,0.998715) (7.1896,0.999082) (7.2174,0.999388) (7.2452,0.999632) (7.2729,0.999815) (7.3007,0.999935) (7.3284,0.999994) (7.3562,1.000000) (7.3839,1.000000) (7.4117,1.000000) (7.4395,1.000000) (7.4672,1.000000) (7.4950,1.000000) (7.5227,1.000000) (7.5505,1.000000) (7.5783,1.000000) (7.6060,1.000000) (7.6338,1.000000) (7.6615,1.000000) (7.6893,1.000000) (7.7171,1.000000) (7.7448,1.000000) (7.7726,1.000000) (7.8003,1.000000) (7.8281,1.000000) (7.8559,1.000000) (7.8836,1.000000) (7.9114,1.000000) (7.9391,1.000000) (7.9669,1.000000) (7.9946,1.000000) (8.0224,1.000000) (8.0502,1.000000) (8.0779,1.000000) (8.1057,1.000000) (8.1334,1.000000) (8.1612,1.000000) (8.1890,1.000000) (8.2167,1.000000) (8.2445,1.000000) (8.2722,1.000000) (8.3000,1.000000)};
\addplot[only marks, mark=o, mark size=2.0pt, black] coordinates {(1,0.087616000000) (2,0.230553625600) (3,0.416171556905) (4,0.615067913596) (5,0.795737512774) (6,0.929562289928) (7,0.995344400358) (8,1.000000000000)};
\legend{Grover envelope, optimized fidelity}
\end{axis}
\end{tikzpicture}\hfill
\begin{tikzpicture}
\begin{axis}[gmaxis, xlabel={$p$},
  xmin=-0.2, xmax=5.3, ymin=0, ymax=1.06, xtick={0,1,2,3,4,5},
  title={$\lambda=(0,1,3,5)$, $\sin^2\theta_0=0.03$}]
\addplot[gmblue, line width=0.9pt] coordinates {(0.0000,0.030000) (0.0177,0.032141) (0.0355,0.034354) (0.0532,0.036637) (0.0709,0.038992) (0.0886,0.041416) (0.1064,0.043910) (0.1241,0.046474) (0.1418,0.049107) (0.1595,0.051808) (0.1773,0.054578) (0.1950,0.057415) (0.2127,0.060321) (0.2304,0.063293) (0.2482,0.066331) (0.2659,0.069436) (0.2836,0.072606) (0.3013,0.075841) (0.3191,0.079141) (0.3368,0.082505) (0.3545,0.085933) (0.3722,0.089424) (0.3900,0.092977) (0.4077,0.096593) (0.4254,0.100269) (0.4431,0.104007) (0.4609,0.107805) (0.4786,0.111663) (0.4963,0.115580) (0.5140,0.119555) (0.5318,0.123589) (0.5495,0.127679) (0.5672,0.131827) (0.5849,0.136031) (0.6027,0.140290) (0.6204,0.144603) (0.6381,0.148971) (0.6559,0.153393) (0.6736,0.157867) (0.6913,0.162394) (0.7090,0.166972) (0.7268,0.171600) (0.7445,0.176279) (0.7622,0.181007) (0.7799,0.185783) (0.7977,0.190608) (0.8154,0.195479) (0.8331,0.200397) (0.8508,0.205361) (0.8686,0.210369) (0.8863,0.215422) (0.9040,0.220518) (0.9217,0.225656) (0.9395,0.230836) (0.9572,0.236058) (0.9749,0.241319) (0.9926,0.246620) (1.0104,0.251960) (1.0281,0.257337) (1.0458,0.262751) (1.0635,0.268202) (1.0813,0.273688) (1.0990,0.279208) (1.1167,0.284762) (1.1344,0.290348) (1.1522,0.295967) (1.1699,0.301617) (1.1876,0.307297) (1.2054,0.313006) (1.2231,0.318744) (1.2408,0.324509) (1.2585,0.330301) (1.2763,0.336119) (1.2940,0.341962) (1.3117,0.347829) (1.3294,0.353719) (1.3472,0.359632) (1.3649,0.365566) (1.3826,0.371520) (1.4003,0.377494) (1.4181,0.383486) (1.4358,0.389497) (1.4535,0.395524) (1.4712,0.401567) (1.4890,0.407625) (1.5067,0.413697) (1.5244,0.419782) (1.5421,0.425880) (1.5599,0.431989) (1.5776,0.438108) (1.5953,0.444237) (1.6130,0.450374) (1.6308,0.456518) (1.6485,0.462670) (1.6662,0.468827) (1.6839,0.474988) (1.7017,0.481154) (1.7194,0.487322) (1.7371,0.493492) (1.7548,0.499664) (1.7726,0.505835) (1.7903,0.512006) (1.8080,0.518174) (1.8258,0.524340) (1.8435,0.530502) (1.8612,0.536660) (1.8789,0.542812) (1.8967,0.548957) (1.9144,0.555095) (1.9321,0.561225) (1.9498,0.567345) (1.9676,0.573455) (1.9853,0.579554) (2.0030,0.585641) (2.0207,0.591714) (2.0385,0.597774) (2.0562,0.603819) (2.0739,0.609847) (2.0916,0.615860) (2.1094,0.621854) (2.1271,0.627830) (2.1448,0.633787) (2.1625,0.639723) (2.1803,0.645638) (2.1980,0.651530) (2.2157,0.657400) (2.2334,0.663246) (2.2512,0.669066) (2.2689,0.674861) (2.2866,0.680630) (2.3043,0.686370) (2.3221,0.692083) (2.3398,0.697766) (2.3575,0.703419) (2.3753,0.709041) (2.3930,0.714631) (2.4107,0.720189) (2.4284,0.725713) (2.4462,0.731202) (2.4639,0.736657) (2.4816,0.742075) (2.4993,0.747456) (2.5171,0.752800) (2.5348,0.758105) (2.5525,0.763371) (2.5702,0.768597) (2.5880,0.773782) (2.6057,0.778925) (2.6234,0.784025) (2.6411,0.789083) (2.6589,0.794096) (2.6766,0.799064) (2.6943,0.803987) (2.7120,0.808864) (2.7298,0.813694) (2.7475,0.818475) (2.7652,0.823209) (2.7829,0.827893) (2.8007,0.832527) (2.8184,0.837110) (2.8361,0.841642) (2.8538,0.846122) (2.8716,0.850549) (2.8893,0.854923) (2.9070,0.859243) (2.9247,0.863508) (2.9425,0.867718) (2.9602,0.871871) (2.9779,0.875968) (2.9957,0.880008) (3.0134,0.883990) (3.0311,0.887913) (3.0488,0.891778) (3.0666,0.895582) (3.0843,0.899326) (3.1020,0.903010) (3.1197,0.906632) (3.1375,0.910192) (3.1552,0.913690) (3.1729,0.917124) (3.1906,0.920495) (3.2084,0.923802) (3.2261,0.927045) (3.2438,0.930222) (3.2615,0.933334) (3.2793,0.936380) (3.2970,0.939359) (3.3147,0.942271) (3.3324,0.945116) (3.3502,0.947893) (3.3679,0.950602) (3.3856,0.953243) (3.4033,0.955814) (3.4211,0.958316) (3.4388,0.960748) (3.4565,0.963110) (3.4742,0.965401) (3.4920,0.967621) (3.5097,0.969770) (3.5274,0.971848) (3.5452,0.973853) (3.5629,0.975787) (3.5806,0.977648) (3.5983,0.979436) (3.6161,0.981151) (3.6338,0.982793) (3.6515,0.984361) (3.6692,0.985856) (3.6870,0.987276) (3.7047,0.988622) (3.7224,0.989894) (3.7401,0.991091) (3.7579,0.992214) (3.7756,0.993261) (3.7933,0.994233) (3.8110,0.995130) (3.8288,0.995952) (3.8465,0.996698) (3.8642,0.997368) (3.8819,0.997963) (3.8997,0.998481) (3.9174,0.998924) (3.9351,0.999291) (3.9528,0.999581) (3.9706,0.999796) (3.9883,0.999934) (4.0060,0.999996) (4.0237,1.000000) (4.0415,1.000000) (4.0592,1.000000) (4.0769,1.000000) (4.0946,1.000000) (4.1124,1.000000) (4.1301,1.000000) (4.1478,1.000000) (4.1656,1.000000) (4.1833,1.000000) (4.2010,1.000000) (4.2187,1.000000) (4.2365,1.000000) (4.2542,1.000000) (4.2719,1.000000) (4.2896,1.000000) (4.3074,1.000000) (4.3251,1.000000) (4.3428,1.000000) (4.3605,1.000000) (4.3783,1.000000) (4.3960,1.000000) (4.4137,1.000000) (4.4314,1.000000) (4.4492,1.000000) (4.4669,1.000000) (4.4846,1.000000) (4.5023,1.000000) (4.5201,1.000000) (4.5378,1.000000) (4.5555,1.000000) (4.5732,1.000000) (4.5910,1.000000) (4.6087,1.000000) (4.6264,1.000000) (4.6441,1.000000) (4.6619,1.000000) (4.6796,1.000000) (4.6973,1.000000) (4.7151,1.000000) (4.7328,1.000000) (4.7505,1.000000) (4.7682,1.000000) (4.7860,1.000000) (4.8037,1.000000) (4.8214,1.000000) (4.8391,1.000000) (4.8569,1.000000) (4.8746,1.000000) (4.8923,1.000000) (4.9100,1.000000) (4.9278,1.000000) (4.9455,1.000000) (4.9632,1.000000) (4.9809,1.000000) (4.9987,1.000000) (5.0164,1.000000) (5.0341,1.000000) (5.0518,1.000000) (5.0696,1.000000) (5.0873,1.000000) (5.1050,1.000000) (5.1227,1.000000) (5.1405,1.000000) (5.1582,1.000000) (5.1759,1.000000) (5.1936,1.000000) (5.2114,1.000000) (5.2291,1.000000) (5.2468,1.000000) (5.2645,1.000000) (5.2823,1.000000) (5.3000,1.000000)};
\addplot[only marks, mark=o, mark size=2.0pt, black] coordinates {(1,0.248832000000) (2,0.584607820800) (3,0.880990240236) (4,0.999983603817) (5,1.000000000000)};
\end{axis}
\end{tikzpicture}
\caption{Attainment of the Grover envelope. Circles show the largest value of $|\langle e|\psi_p\rangle|^2$ found at each depth $p$. The curve is the envelope $\sin^2\bigl(\min\{(2p+1)\theta_0,\pi/2\}\bigr)$ of Theorem~\ref{thm:grover}(1), drawn as a function of a continuous depth variable. Left: $d=2$ with $\sin^2\theta_0=0.01$. Right: the spectrum $\lambda=(0,1,3,5)$ with $\sin^2\theta_0=0.03$ and the remaining three weights equal. In both panels, the circles agree with the envelope at every integer depth to the double-precision rounding error.}
\label{fig:envelope}
\end{figure}

\begin{figure}[t]

\centering
\begin{tikzpicture}
\begin{axis}[gmaxis, xlabel={$p$}, ylabel={minimized infidelity}, ymode=log,
  xmin=0.5, xmax=7.95, ymin=2e-16, ymax=6, xtick={1,2,3,4,5,6,7},
  ytick={1e0,1e-5,1e-10,1e-15},
  title={consecutive, uniform weights}]
\addplot[black!40, dashed, line width=0.5pt, forget plot] coordinates {(0.5,1e-15) (7.95,1e-15)};
\addplot[gmblue, mark=*, mark size=1.6pt, line width=0.5pt] coordinates {(1,2.001603e-01) (2,1.000000e-15)};
\node[font=\tiny, gmblue, anchor=south west] at (axis cs:2.10,2.5e-15) {$d{=}3$};
\addplot[gmverm, mark=square*, mark size=1.6pt, line width=0.5pt] coordinates {(1,3.462939e-01) (2,9.232647e-02) (3,1.000000e-15)};
\node[font=\tiny, gmverm, anchor=south west] at (axis cs:3.10,2.5e-15) {$d{=}4$};
\addplot[gmgreen, mark=triangle*, mark size=1.6pt, line width=0.5pt] coordinates {(1,4.500879e-01) (2,1.887771e-01) (3,4.769239e-02) (4,1.000000e-15)};
\node[font=\tiny, gmgreen, anchor=south west] at (axis cs:4.10,2.5e-15) {$d{=}5$};
\addplot[gmpurp, mark=diamond*, mark size=1.6pt, line width=0.5pt] coordinates {(1,5.263577e-01) (2,2.909950e-01) (3,1.220803e-01) (4,3.130331e-02) (5,1.000000e-15)};
\node[font=\tiny, gmpurp, anchor=south west] at (axis cs:5.10,2.5e-15) {$d{=}6$};
\addplot[gmgold, mark=x, mark size=1.6pt, line width=0.5pt] coordinates {(1,5.844226e-01) (2,3.443091e-01) (3,1.721894e-01) (4,6.735113e-02) (5,2.206138e-02) (6,1.000000e-15)};
\node[font=\tiny, gmgold, anchor=south west] at (axis cs:6.10,2.5e-15) {$d{=}7$};
\addplot[black, mark=o, mark size=1.6pt, line width=0.5pt] coordinates {(1,6.299817e-01) (2,4.040297e-01) (3,2.304526e-01) (4,1.127721e-01) (5,5.544985e-02) (6,1.203431e-02) (7,1.000000e-15)};
\node[font=\tiny, black, anchor=south west] at (axis cs:7.10,2.5e-15) {$d{=}8$};
\end{axis}
\end{tikzpicture}\hfill
\begin{tikzpicture}
\begin{axis}[gmaxis, xlabel={$p$}, ymode=log,
  xmin=8.5, xmax=33, ymin=2e-16, ymax=6, xtick={10,15,20,25,30},
  ytick={1e0,1e-5,1e-10,1e-15},
  title={smallest target weights}]
\addplot[black!40, dashed, line width=0.5pt, forget plot] coordinates {(8.5,1e-15) (33,1e-15)};
\addplot[gmblue, solid, mark=*, mark size=1.6pt, line width=0.5pt] coordinates {(9,2.1740e-01) (10,1.3020e-01) (11,6.0320e-02) (12,1.5180e-02) (13,1.0000e-15)};
\addplot[gmverm, dashed, mark=square*, mark size=1.6pt, line width=0.5pt] coordinates {(18,1.3140e-01) (19,8.2130e-02) (20,4.8670e-02) (21,2.3640e-02) (22,7.4470e-03) (23,3.4010e-04) (24,1.0000e-15)};
\addplot[gmgreen, densely dotted, mark=triangle*, mark size=1.6pt, line width=0.5pt] coordinates {(20,2.9440e-01) (21,2.4930e-01) (22,2.0620e-01) (23,1.8970e-01) (24,1.5090e-01) (25,1.1520e-01) (26,6.6430e-02) (27,4.2500e-02) (28,3.8030e-02) (29,2.0400e-02) (30,9.6040e-03) (31,2.2370e-03) (32,8.2160e-15)};
\addplot[gmpurp, dashdotted, mark=o, mark size=1.6pt, line width=0.5pt] coordinates {(10,3.2420e-01) (11,2.3990e-01) (12,1.6570e-01) (13,1.0300e-01) (14,5.3430e-02) (15,1.8630e-02) (16,1.0470e-03) (17,1.0000e-15)};
\addplot[gmgold, solid, mark=diamond*, mark size=1.6pt, line width=0.5pt] coordinates {(25,8.4920e-02) (26,5.8780e-02) (27,3.7210e-02) (28,2.0360e-02) (29,8.4160e-03) (30,1.5780e-03) (31,1.0000e-15)};
\node[font=\tiny, gmblue, rotate=90, anchor=west] at (axis cs:13.4,5e-15) {$(8,\,0.01)$};
\node[font=\tiny, gmpurp, rotate=90, anchor=west] at (axis cs:17.4,5e-15) {$m{=}8$};
\node[font=\tiny, gmverm, rotate=90, anchor=west] at (axis cs:24.4,5e-15) {$(8,\,0.003)$};
\node[font=\tiny, gmgold, rotate=90, anchor=west] at (axis cs:30.1,5e-15) {$m{=}10$};
\node[font=\tiny, gmgreen, rotate=90, anchor=west] at (axis cs:32.5,5e-15) {$(16,\,0.003)$};
\end{axis}
\end{tikzpicture}
\caption{The smallest infidelity $1-|\langle e|\psi_p\rangle|^2$ found at each depth $p$, on a logarithmic scale. Values at or below $10^{-15}$ are plotted on the dashed line at $10^{-15}$. Left: the consecutive spectra $\lambda=(0,1,\dots,d-1)$ with uniform weights and $3\le d\le8$. Each curve reaches the rounding floor at $p^*=d-1$, the smallest depth permitted by Theorem~\ref{thm:run} for exact reachability. Right: the consecutive instances with $d=8$, $\sin^2\theta_0\in\{0.01,0.003\}$, and $d=16$, $\sin^2\theta_0=0.003$, and the binomial instances with $m\in\{8,10\}$. The curves are labeled by the pair $(d,\sin^2\theta_0)$ for the consecutive instances and by $m$ for the binomial instances.}
\label{fig:infid}
\end{figure}

Next, we consider the same consecutive spectra with $\sin^2\theta_0\in\{0.1,0.03,0.01,0.003\}$ and the remaining weights equal, and the binomial weights $c_j^2=\binom{m}{j-1}2^{-m}$ in $\lambda=(0,1,\dots,m)$ with target $j_0=1$ and $m\in\{4,6,8,10\}$. For these instances, the measured value of $p^*$ exceeded the lower bound in Theorem~\ref{thm:intro-depth} by a factor between $1.1$ and $2.5$. {The consecutive instances have $d\in\{3,4,5,6,8\}$, together with one larger instance with $d=16$ at $\sin^2\theta_0=0.003$, and the pairs $(d,\sin^2\theta_0)$ that were run are those shown in Figure~\ref{fig:pstar}. The right panel of Figure~\ref{fig:infid} records the minimized infidelity at each depth for instances with the smallest target weights. Figure~\ref{fig:pstar} compares every measured $p^*$ of this paragraph with the integer lower bound obtained from Theorem~\ref{thm:intro-depth}.} For example, we measured $p^*=31$ for $m=10$, while Theorem~\ref{thm:intro-depth} gives the integer lower bound $25$. These computations are consistent with the proved lower bounds, but do not determine the optimal depth in general.}

\begin{figure}[t]

\centering
\begin{tikzpicture}
\begin{axis}[gmaxis, width=0.62\textwidth, height=6.4cm,
  xlabel={integer lower bound from Theorem~\ref{thm:intro-depth}}, ylabel={$p^*$},
  xmin=0, xmax=28, ymin=0, ymax=35, xtick={0,5,10,15,20,25}, ytick={0,5,10,15,20,25,30,35},
  legend pos=north west]
\addplot[black!40, dashed, domain=0:28] {x};
\addplot[only marks, mark=*, mark size=1.8pt, gmblue] coordinates {(2,4) (2,5) (5,6) (5,7) (8,9) (8,10) (8,11) (8,12) (8,13) (14,17) (14,20) (14,24) (14,32)};
\addplot[only marks, mark=square, mark size=1.8pt, gmverm] coordinates {(3,5) (6,9) (13,17) (25,31)};
\legend{equality with the bound, consecutive, binomial}
\node[font=\tiny, gmblue, anchor=west] at (axis cs:14.4,32) {$d{=}16$};
\node[font=\tiny, gmverm, anchor=west] at (axis cs:20.5,31) {$m{=}10$};
\end{axis}
\end{tikzpicture}
\caption{The measured $p^*$ against the integer lower bound $\bigl\lceil\arcsin\sqrt{1-\varepsilon}/(2\theta_0)-\tfrac12\bigr\rceil$ from Theorem~\ref{thm:intro-depth} at $\varepsilon=10^{-9}$. For every instance shown, this bound equals $\lceil\pi/(4\theta_0)-\tfrac12\rceil$. Filled circles: the thirteen consecutive instances with $d\in\{3,4,5,6,8,16\}$ and the values of $\sin^2\theta_0$ named in the text. Open squares: the binomial instances with $m\in\{4,6,8,10\}$. The dashed line marks equality with the bound. Every point lies above the line.}
\label{fig:pstar}
\end{figure}

\section{Conclusions}\label{sec:conclusions}

In this paper, we obtained two complementary finite-depth results for QAOA with a Grover mixer: on the variance of the loss function and on the reachability of a carried eigenstate such as the ground state.

First, Theorem~\ref{thm:intro-variance} gives an explicit lower bound on the variance of the loss that holds at every depth under the lattice condition and the independent uniform parameter law. Corollary~\ref{cor:gradient} gives the same bound for the derivative with respect to the final mixing angle. These estimates are obtained directly from phase averaging and the recursion for the mean populations, without using the dynamical Lie algebra or a large-depth approximation. The bounds are instance-dependent, and Remark~\ref{rem:family-scaling} identifies the additional scaling condition needed to obtain an inverse-polynomial statement for a growing family. Theorem~\ref{thm:local} verifies this condition for integer-valued $s$-local objective functions with the uniform-superposition initial state when $s$ and the bound $K$ on the local terms are fixed. In that case, the theorem gives, at every depth, a lower bound of order $n^{-7s}$ for the variance of the normalized loss and of the final-mixer gradient. This bound is a finite-depth counterpart of the large-depth bound of order $n^{-2s}$ in Theorem~IV.4 of \cite{TNB25}. For MaxCut with the uniform-superposition initial state, Corollary~\ref{cor:maxcut} gives sharper inverse-polynomial lower bounds for both the loss variance and the final-mixer gradient variance at every depth.

Cerezo et al.~\cite{CLG25} point out that mechanisms that provably avoid barren plateaus often confine the dynamics to a subspace in which the loss can be simulated classically. In the present setting, the dynamics is confined to $W_0$ by Lemma~\ref{lem:invariant}, and the loss is a function of the carried cost values $\lambda_j$ and their weights $c_j$, so it can be computed classically once these $2d$ numbers are known. However, determining the carried cost values and their weights is in general equivalent to solving the underlying optimization problem; see Section~V.B of \cite{TNB25}.

Second, we obtained results on reachability. Theorem~\ref{thm:grover} shows that, after $p$ layers of GM-QAOA, the probability assigned to any set of carried cost values is at most $(2p+1)^2$ times its probability under the initial distribution. Theorem~\ref{thm:intro-depth} therefore implies that a carried eigenspace of initial probability $q$ cannot be approximated with constant fidelity unless $p$ is of order at least $q^{-1/2}$. For $|\xi\rangle=|+\rangle^{\otimes n}$ and a target eigenvalue attained on $N_*$ computational basis states, this gives a lower bound of order $2^{n/2}N_*^{-1/2}$. Theorem~\ref{thm:inject} presents a spectrum-dependent refinement of the linear overlap bound, while Theorem~\ref{thm:run} provides an independent obstruction to exact reachability when the carried spectrum contains consecutive lattice points. The numerical observations in Subsection~\ref{subsec:numerics} indicate that these obstructions are sharp or close to sharp in several small families. 

We note that the variance and reachability bounds concern different properties: the loss and final-mixer gradient may have inverse-polynomial variance even when concentrating the state on a rare optimum requires exponential depth.

Several questions remain open. It would be useful to determine whether the depth-independent variance bounds persist without the lattice condition, for nonuniform or correlated parameter distributions, and for gradient coordinates other than the final mixing angle. On the reachability side, matching upper bounds for the depth in structured families such as MaxCut, Boolean satisfiability \cite{ZPSQKDPH24}, and constrained optimization problems \cite{BE20,HWORVB19} would clarify how much of the Grover-type obstruction is intrinsic. A further direction is to combine the finite-depth estimates here with the dynamical-Lie-algebra description of \cite{TNB25} to quantitatively understand the transition from shallow circuits to the asymptotic controllable regime. Finally, the dependence of the reachability bound on the initial target probability suggests an algorithmic application. The preparation of the state and the design of the mixer can be compared by how much probability they initially place in low- or high-cost subspaces, since increasing that weight directly weakens the depth obstruction.

\section*{Acknowledgements}
We thank Boris Tsvelikhovskiy for fruitful discussions.
BNB was supported by the U.S. Department of Energy, Advanced Scientific Computing Research, under contract number DE-SC0025384. DG is partially supported by Simons Collaboration Grant 855678. The calculations and the initial draft of this paper were produced with the assistance of Anthropic Claude Fable 5 and OpenAI ChatGPT 5.6. The authors assume full responsibility for the accuracy of the final version.


\end{document}